\theoremstyle{plain}
\newtheorem{theorem}{Theorem}
\newtheorem{lemma}[theorem]{Lemma}
\newtheorem{proposition}[theorem]{Proposition}
\newtheorem{corollary}[theorem]{Corollary}
\newtheorem{definition}[theorem]{Definition}
\newtheorem{problem}[theorem]{Problem}
\theoremstyle{definition}
\newcommand{\beq}{\begin{equation*}}
\newcommand{\eeq}{\end{equation*}}
\DeclareMathOperator{\tr}{Tr}
\DeclareMathOperator{\rk}{rk}
\DeclareMathOperator{\im}{im}
\DeclareMathOperator{\codim}{codim}
\DeclareMathOperator{\dmin}{d_{min}}
\DeclareMathOperator{\dmax}{d_{max}}
\DeclareMathOperator{\Bilin}{Bilin}
\newcommand{\F}{\mathbb{F}}
\newcommand{\cL}{\mathscr{L}}
\newcommand{\cP}{\mathcal{P}}
\newcommand{\PP}{\mathbb{P}}
\newcommand{\EE}{\mathbb{E}}
\newcommand{\XX}{X_{k\times\ell}}
\newcommand{\eg}{\emph{e.g. }}
\newcommand{\longto}{\longrightarrow}
\newcommand{\tens}{\otimes}
\newcommand{\moins}{\setminus}
\newcommand{\ev}{\mathrm{ev}}
\newcommand{\deux}[1][2]{^{\langle #1\rangle}}
\newcommand{\abs}[1]{\lvert #1\rvert}
\newcommand{\linspan}[1]{\langle #1\rangle}
\newcommand{\SFrob}[1]{S^{#1}_{\textrm{Frob}}}
\newcommand{\gauss}[2]{\begin{bmatrix}#1\\#2\end{bmatrix}_q}
\let\phi\varphi
\let\epsilon\varepsilon
\let\subset\subseteq
\let\ell l
\begin{document}

\title{Linear independence of rank~$1$ matrices\\ and the dimension of $*$-products of codes}

\author{Hugues Randriambololona\\
ENST ``Telecom ParisTech'' \& LTCI CNRS UMR 5141}

\maketitle

\begin{abstract}
We show that with high probability, random rank~$1$ matrices over a finite field are in (linearly) general position,
at least provided their shape $k\times\ell$ is not excessively unbalanced.
This translates into saying that the dimension of the $*$-product of two $[n,k]$ and $[n,\ell]$ random codes is equal
to $\min(n,kl)$, as one would have expected.
Our work is inspired by a similar result of Cascudo-Cramer-Mirandola-Z\'emor \cite{CCMZ} dealing with $*$-squares of codes,
which it complements, especially regarding applications to the analysis of McEliece-type cryptosystems \cite{CGGOT}\cite{COT}.
We also briefly mention the case of higher $*$-powers, which require to take the Frobenius into account.
We then conclude with some open problems.
\end{abstract}


\section{Introduction}
\label{Intro}
Many fundamental problems in information theory and in theoretical computer science can be expressed in terms of the
structure of linearly independent and generating subsets of a set in a vector space, as illustrated by \cite{W} and the subsequent success 
of matroid theory.
In this context the importance of the following definition is self-evident:
\setcounter{theorem}{-1}
\begin{definition}
\label{gp}
Let $V$ be a finite-dimensional vector space, over an arbitrary field.
We say a set $X\subset V$ is in general position if
any finite subset $S\subset X$ has its linear span $\linspan{S}$ of dimension
\beq
\dim\linspan{S}=\min(\abs{S},\dim V).
\eeq
\end{definition}
This means that there are no more linear relations than expected between elements of $X$:
any $S\subset X$ of size $\abs{S}\leq\dim V$ is linearly independent,
and any $S\subset X$ of size $\abs{S}\geq\dim V$ is a generating set in $V$.

This requirement is quite strong, and weaker variants have been considered. We can cite at least three of them.

The first one is to introduce thresholds. We say $X$ is in $(a,b)$-general position
if any $S\subset X$ of size $\abs{S}\leq a$ is linearly independent,
and any $S\subset X$ of size $\abs{S}\geq b$ is a generating set in $V$.
This notion should look very familiar to coding experts. Indeed one shows easily:
\begin{lemma}
\label{gp-dmin}
Let $C$ be a $q$-ary $[n,k]$ code, with generating matrix $G$.
Set $V=\F_q^k$ and let $X\subset V$ be the set of columns of $G$.
Then $X$ is in $(a,b)$-general position, with $a=\dmin(C^\perp)-1$
and $b=n-\dmin(C)+1$.
\end{lemma}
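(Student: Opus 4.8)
The plan is to verify the two defining conditions of $(a,b)$-general position separately, exploiting the two dual roles played by the matrix $G$. Since $C$ is the row space of $G$, the codewords of $C^\perp$ are exactly the vectors $v\in\F_q^n$ with $Gv^{\mathrm{T}}=0$, so $G$ is a parity-check matrix for $C^\perp$; on the other hand the codewords of $C$ are the vectors $c=uG$ for $u\in V=\F_q^k$. Each interpretation controls one of the two thresholds.

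For the independence threshold $a=\dmin(C^\perp)-1$, I would take a subset $S\subset X$ with $\abs{S}\leq a$ and suppose, for contradiction, a nontrivial relation $\sum_{w\in S}\lambda_w w=0$. Choosing for each $w\in S$ a coordinate $j(w)$ with $G_{j(w)}=w$, and setting $v_{j(w)}=\lambda_w$ and $0$ elsewhere, produces a vector $v$ with $Gv^{\mathrm{T}}=\sum_{w\in S}\lambda_w w=0$, \ie a codeword of $C^\perp$. It is nonzero because the relation is nontrivial, and its weight is at most $\abs{S}\leq\dmin(C^\perp)-1$, contradicting the definition of $\dmin(C^\perp)$. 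Hence $S$ is linearly independent.

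For the generating threshold $b=n-\dmin(C)+1$ I would argue by contrapositive. If $S\subset X$ fails to span $V$, then $\linspan{S}$ is a proper subspace, so (the dot product being nondegenerate over $\F_q$) there is a nonzero $u\in V$ with $u\cdot w=0$ for all $w\in S$. Then $c=uG$ is a codeword of $C$, nonzero since $u\neq 0$ and $G$ has rank $k$, whose $j$-th coordinate $u\cdot G_j$ vanishes whenever $G_j\in S$. As the coordinates carrying the distinct elements of $S$ are pairwise disjoint and number at least $\abs{S}$, the support of $c$ has size at most $n-\abs{S}$, whence $\dmin(C)\leq n-\abs{S}$ and $\abs{S}\leq n-\dmin(C)=b-1$. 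Thus any $S$ with $\abs{S}\geq b$ must generate $V$.

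Neither step is genuinely hard; the only real point of care — and the nearest thing to an obstacle — is the passage between subsets of the \emph{set} $X$ and sets of coordinate positions of $G$, since several coordinates may a priori carry the same column. Both arguments above are arranged to survive this: in the first I use only one coordinate per element of $S$, and in the second I use that distinct elements of $S$ occupy disjoint, hence at least $\abs{S}$ many, coordinates. Once this correspondence is pinned down, the two bounds fall straight out of the standard weight-versus-dependence dictionary.
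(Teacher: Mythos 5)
Your proof is correct and is exactly the standard duality argument the paper alludes to when it says ``one shows easily'' (it gives no written proof): small dependent subsets of columns yield low-weight words of $C^\perp$, and subsets annihilated by some nonzero $u\in\F_q^k$ yield codewords $uG$ of small support. Your care about the set-versus-multiset distinction for repeated columns is a legitimate refinement rather than a departure from the intended route.
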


A second one is to allow a small gap $g$ from the expected dimension: we say $X$ is in $g$-almost general position
if for any $S\subset X$ we have
\beq
\dim\linspan{S}\geq\min(\abs{S},\dim V)-g.
\eeq
This means allowing up to $g$ more linear relations than expected.
There is an obvious link with the previous notion:
\begin{lemma}
\label{gp-gap}
If $X\subset V$ is in $(a,b)$-general position, then it is in $g$-almost general position
for $g=\min(\dim V-a,b-\dim V)$.
\end{lemma}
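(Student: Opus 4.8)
The plan is to reduce the statement to two independent one-sided estimates and then combine them. Write $d=\dim V$, and note we may as well assume $a\le d\le b$, so that $g=\min(d-a,b-d)\ge 0$ (this is the nondegenerate regime; outside it the asserted inequality carries no real content). For a fixed finite $S\subset X$ I would prove the two bounds
\beq
\dim\linspan{S}\ge\min(\abs{S},d)-(d-a)
\qquad\text{and}\qquad
\dim\linspan{S}\ge\min(\abs{S},d)-(b-d).
\eeq
Once both are established the lemma follows immediately, since $\min(\abs{S},d)-g$ is exactly the \emph{larger} of the two right-hand sides: a quantity bounded below by each of them is bounded below by their maximum.

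The first (``independence'') bound uses only the independence half of the hypothesis and is routine. Choosing any $S'\subset S$ with $\abs{S'}=\min(\abs{S},a)$, the condition $\abs{S'}\le a$ forces $S'$ to be linearly independent, so $\dim\linspan{S}\ge\dim\linspan{S'}=\min(\abs{S},a)$. It then remains to verify the elementary inequality $\min(\abs{S},a)\ge\min(\abs{S},d)-(d-a)$, which I would dispatch by splitting on whether $\abs{S}\le a$, $a<\abs{S}\le d$, or $\abs{S}>d$.

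The second (``generating'') bound is the real content, and the step I expect to be the main obstacle, since this is where the two halves of the hypothesis must genuinely interact. If $\abs{S}\ge b$ then $S$ already generates $V$ and there is nothing to prove. If $\abs{S}<b$, the idea is to \emph{enlarge} $S$ to a set $T$ with $S\subset T\subset X$ and $\abs{T}=b$; then $\linspan{T}=V$ by the generating half of the hypothesis, while adjoining $\abs{T}-\abs{S}$ vectors can raise the dimension by at most that many, whence $d=\dim\linspan{T}\le\dim\linspan{S}+(b-\abs{S})$. Rearranging gives $\dim\linspan{S}\ge\abs{S}-(b-d)\ge\min(\abs{S},d)-(b-d)$, as required.

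The enlargement step presupposes that $X$ has enough elements, i.e.\ $\abs{X}\ge b$, and this is the point to handle with care. It holds automatically in the situation of Lemma~\ref{gp-dmin}, where $\abs{X}=n\ge b$ and the columns of $G$ span $V$; this mild nondegeneracy, present in all cases of interest, is precisely what makes the enlargement possible (for an $X$ trapped in a proper subspace the two conditions may hold for a vacuous reason and the dimension of $\linspan{X}$ can sit well below $\min(\abs{X},d)-g$). Granting it, combining the two bounds yields $\dim\linspan{S}\ge\min(\abs{S},d)-\min(d-a,b-d)=\min(\abs{S},d)-g$, which is the assertion.
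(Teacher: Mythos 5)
The paper states this lemma without proof (it is offered as an ``obvious link''), so there is no official argument to compare against; your proof is the natural one and it is correct. The decomposition into the two one-sided bounds $\dim\linspan{S}\ge\min(\abs{S},d)-(d-a)$ and $\dim\linspan{S}\ge\min(\abs{S},d)-(b-d)$, the sub-selection argument for the independence half, the enlargement argument for the generating half, and the closing observation that $\min(\abs{S},d)-g$ is exactly the larger of the two right-hand sides are all sound.

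Your caveat about the enlargement step is not pedantry: the lemma as literally stated is false without some nondegeneracy. For instance, five pairwise non-proportional nonzero vectors $X$ inside a $3$-dimensional subspace of $V=\F_2^{10}$ are in $(2,11)$-general position for purely vacuous reasons, yet $\dim\linspan{X}=3<\min(5,10)-1$, so $X$ is not in $1$-almost general position. It suffices to assume either $\abs{X}\ge b$ (your choice) or merely that $X$ spans $V$ --- in the latter case enlarge $S$ to $T$ of size $\min(b,\abs{X})$; either $\abs{T}=b$ and the hypothesis applies, or $T=X$ spans by assumption, and the same dimension count goes through. The paper supplies exactly this in its applications (the columns of a generating matrix span $\F_q^k$, and Section~2 explicitly assumes $X$ spans $V$), so your flagging the hypothesis is the right call. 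The one quibble is your parenthetical that outside $a\le\dim V\le b$ the statement ``carries no real content'': when $g<0$ the asserted inequality is \emph{strictly stronger} than general position and is simply false, so that regime should be excluded (as it implicitly is, since $(a,b)$-general position is only meaningful for $a\le\dim V\le b$) rather than described as contentless.
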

We leave it to the reader to combine Lemma~\ref{gp-dmin} and Lemma~\ref{gp-gap} and give a coding-theoretic interpretation of this integer $g$ (or a geometric interpretation in case $C$ is an AG-code).

Last, our third variant is probabilistic, allowing a small proportion of $S$ to fail in Definition~\ref{gp}.
In fact, rather than subsets of $X$, it will be easier to consider sequences of elements of $X$, possibly with repetitions.
For this we will assume that $X$ is equipped with a probability distribution $\cL$.
A natural choice when $X$ is finite would be to take the uniform distribution,
however more general $\cL$ will be allowed.
Then, measuring how close $X\subset V$ is to being in general position reduces to the following:
\begin{problem}
\label{gp-proba}
Let $n\geq1$, and $\mathbf{u_1},\dots,\mathbf{u_n}$ random elements of~$X$
(understood: independent, and distributed according to $\cL$).
Give bounds on the ``error probability''
\beq
\PP[\dim\linspan{\mathbf{u_1},\dots,\mathbf{u_n}}<\min(n,\dim V)].
\eeq
\end{problem}

In this work we address this problem for $V=\F_q^{k\times\ell}$ a matrix space,
and $X\subset V$ the set of matrices of rank~$1$.

Understanding the linear span of families of rank~$1$ matrices is especially important
regarding the theory of bilinear complexity (or equivalently, that of tensor decomposition).
Indeed, computing the complexity of a bilinear map (or the rank of a $3$-tensor) reduces to the following \cite{BDEZ}\cite{BD}\cite{LW}\cite{ChCh+}:
given a linear subspace $W\subset\F_q^{k\times\ell}$, find a family of rank~$1$ matrices of minimal cardinality
whose linear span contains $W$.

Another motivation comes from the theory of $*$-products of codes,
and in particular its use in a certain class of attacks \cite{CGGOT}\cite{COT} against McEliece-type cryptosystems.
Given words $\mathbf{c}=(c_1,\dots,c_n),\mathbf{c'}=(c'_1,\dots,c'_n)\in\F_q^n$,
we let $\mathbf{c}*\mathbf{c'}=(c_1c'_1,\dots,c_nc'_n)\in\F_q^n$ be their componentwise product.
Then \cite{AGCT} if $C,C'\subset\F_q^n$ are two linear codes of the same length,
their product $C*C'\subset\F_q^n$ is defined as the linear span of the $\mathbf{c}*\mathbf{c'}$ for $\mathbf{c}\in C,\mathbf{c'}\in C'$.
We can also define the square $C\deux=C*C$, and likewise for higher powers $C\deux[j]$.

Setting $k=\dim C$ and $\ell=\dim C'$, it is then easily seen
\beq
\dim C*C'\leq k\ell,
\eeq
\beq
\dim C\deux\leq k(k+1)/2,
\eeq
and in fact for small $k,\ell$ and random $C,C'$ one expects these inequalities to be equalities.
For the second inequality, this is proved in \cite{CCMZ}. For the first inequality, we will see this reduces
to our solution of Problem~\ref{gp-proba} for rank~$1$ matrices.

So, together, \cite{CCMZ} and our results support the heuristic at the heart of the aforementionned attacks against McEliece-type cryptosystems.
Indeed, the very principle of these attacks is to uncover the hidden algebraic structure of an apparently random code (which serves as the public key)
by identifying subcodes for which equality fails in these inequalities
(for instance, the dimension of the product behaving additively rather than multiplicatively).

\section{Generic approach}

Here $V$ is an abstract vector space of dimension $m$ over $\F_q$, and $X\subset V$ an arbitrary subset. We may assume $X$ spans $V$.
We are interested in the function
\beq
\PP(n)=\PP[\dim\linspan{\mathbf{u_1},\dots,\mathbf{u_n}}<\min(n,\dim V)].
\eeq
Clearly it is unimodal, more precisely it is increasing for $n\leq m$ and decreasing for $n\geq m$.
Now we study each of these two cases in more detail.

\subsection{Case $n\geq m$.}

We have $\dim\linspan{\mathbf{u_1},\dots,\mathbf{u_n}}<m$ iff $\mathbf{u_1},\dots,\mathbf{u_n}$ are contained in an hyperplane $H$ of $V$.
Using the union bound and the independence of the $\mathbf{u_i}$ we get at once:
\begin{proposition}
\label{union}
We have
\beq
\begin{split}
\PP(n)&\leq\sum_H\PP[\mathbf{u_1},\dots,\mathbf{u_n}\in H]\\
&=\sum_H\PP[\mathbf{u_1}\in H]^n
\end{split}
\eeq
where $H$ ranges over hyperplanes of $V$.
\end{proposition}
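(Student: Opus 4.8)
The plan is to convert the geometric characterization stated immediately before the proposition into a union bound, and then to exploit the independence and identical distribution of the $\mathbf{u_i}$.

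First I would make explicit the linear-algebra fact underlying the preceding sentence: a family of vectors spans a subspace of dimension strictly less than $m=\dim V$ precisely when it lies in some hyperplane. Indeed, if $\dim\linspan{\mathbf{u_1},\dots,\mathbf{u_n}}<m$ then this span is a proper subspace, hence is contained in a maximal proper subspace, that is, in the kernel $H$ of some nonzero linear form on $V$; conversely, if all the $\mathbf{u_i}$ lie in a hyperplane $H$, then their span lies in $H$, which has dimension $m-1<m$. Thus the event $\{\dim\linspan{\mathbf{u_1},\dots,\mathbf{u_n}}<m\}$ coincides exactly with the union, over all hyperplanes $H$ of $V$, of the events $\{\mathbf{u_1},\dots,\mathbf{u_n}\in H\}$.

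Next I would bound the probability of this union by the sum of the individual probabilities (the union bound), noting that the index set is finite since $V$ is a finite-dimensional space over the finite field $\F_q$. This yields
\[
\PP(n)\leq\sum_H\PP[\mathbf{u_1},\dots,\mathbf{u_n}\in H].
\]
Finally, for the equality in the second line I would invoke independence: for a fixed $H$, the events $\{\mathbf{u_i}\in H\}$ involve disjoint collections of the independent variables, so their joint probability factors, and since the $\mathbf{u_i}$ are moreover identically distributed according to $\cL$, each factor equals $\PP[\mathbf{u_1}\in H]$, whence $\PP[\mathbf{u_1},\dots,\mathbf{u_n}\in H]=\PP[\mathbf{u_1}\in H]^n$.

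There is no serious obstacle here; the statement is essentially a packaging of three elementary ingredients. The only step requiring a moment's thought is the first --- recognising that ``$\dim<m$'' is equivalent to containment in a hyperplane --- which is precisely where finite-dimensionality of $V$ enters, guaranteeing that any proper subspace sits inside a maximal one. Everything else is the union bound together with the factorisation of probabilities under independence; in particular it is independence, not merely a bound, that upgrades the second line from an inequality to an exact equality.
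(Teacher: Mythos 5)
Your proposal is correct and follows exactly the paper's (very compressed) argument: the equivalence of $\dim\linspan{\mathbf{u_1},\dots,\mathbf{u_n}}<m$ with containment in some hyperplane, the union bound over the finitely many hyperplanes, and factorisation via independence and identical distribution. You have merely spelled out what the paper states in one sentence before the proposition.
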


This bound is exponentially small. More precisely, set
\beq
\rho=\max_H\PP[\mathbf{u_1}\in H]
\eeq
(for instance $\rho=\max_H\abs{X\cap H}/\abs{X}$ if $\cL$ is uniform distribution).
We then see immediately:

\begin{corollary}
\label{cc'}
For all $n\geq m$ we have
\beq
c\rho^{n-m}\leq\PP(n)\leq c'\rho^{n-m}.
\eeq
where $c=\rho^m$ and $c'=\sum_H\PP[\mathbf{u_1}\in H]^m$.
\end{corollary}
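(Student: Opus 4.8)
The plan is to derive both inequalities directly from Proposition~\ref{union} together with the definition of $\rho$, so the work is essentially bookkeeping rather than any new idea.

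For the upper bound I would start from the estimate of Proposition~\ref{union},
\beq
\PP(n)\leq\sum_H\PP[\mathbf{u_1}\in H]^n,
\eeq
and factor each summand as $\PP[\mathbf{u_1}\in H]^n=\PP[\mathbf{u_1}\in H]^m\cdot\PP[\mathbf{u_1}\in H]^{n-m}$. Since $n\geq m$ the exponent $n-m$ is nonnegative, so substituting $\PP[\mathbf{u_1}\in H]\leq\rho$ in the second factor gives $\PP[\mathbf{u_1}\in H]^n\leq\rho^{n-m}\,\PP[\mathbf{u_1}\in H]^m$. Summing over all hyperplanes $H$ and pulling the constant $\rho^{n-m}$ out of the sum then yields exactly $\PP(n)\leq c'\rho^{n-m}$ with $c'=\sum_H\PP[\mathbf{u_1}\in H]^m$, as required.

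For the lower bound I would use a single witness hyperplane. Let $H_0$ be a hyperplane attaining the maximum, so that $\PP[\mathbf{u_1}\in H_0]=\rho$; such an $H_0$ exists because a finite-dimensional space over $\F_q$ has only finitely many hyperplanes, so the maximum defining $\rho$ is attained. The event that all of $\mathbf{u_1},\dots,\mathbf{u_n}$ lie in $H_0$ forces $\dim\linspan{\mathbf{u_1},\dots,\mathbf{u_n}}\leq m-1<m=\min(n,m)$, hence it is contained in the event whose probability is $\PP(n)$. By independence and identical distribution of the $\mathbf{u_i}$, that event has probability $\rho^n=\rho^m\rho^{n-m}=c\rho^{n-m}$, giving $\PP(n)\geq c\rho^{n-m}$.

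There is no serious obstacle here; the only point requiring a moment's care is the factorization $n=m+(n-m)$, which is exactly where the hypothesis $n\geq m$ is used, since it keeps the exponent $n-m$ nonnegative and thereby preserves the direction of the inequality $\PP[\mathbf{u_1}\in H]^{n-m}\leq\rho^{n-m}$. One can also note as a sanity check that both bounds agree at the boundary $n=m$, where they read $c\leq\PP(m)\leq c'$.
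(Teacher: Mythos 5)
Your proof is correct and follows exactly the route the paper intends (the corollary is stated as an immediate consequence of Proposition~\ref{union} and the definition of $\rho$): the upper bound by factoring $\PP[\mathbf{u_1}\in H]^n=\PP[\mathbf{u_1}\in H]^m\,\PP[\mathbf{u_1}\in H]^{n-m}$ and bounding the second factor by $\rho^{n-m}$, the lower bound by restricting to a single maximizing hyperplane. Nothing is missing.
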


It should be noted that $c,c',\rho$ depend on $V$ and $X$.
So, part of the job will be to make these constants more explicit when $V$ and $X$ will be specified.

Another interesting fact is that the RHS in Proposition~\ref{union} is
\beq
\sum_H\PP[\mathbf{u_1},\dots,\mathbf{u_n}\in H]=\EE[\abs{\{H;\;\mathbf{u_1},\dots,\mathbf{u_n}\in H\}}],
\eeq
the expected value of the number of hyperplanes containing $\mathbf{u_1},\dots,\mathbf{u_n}$. However, this number is precisely $\frac{q^d-1}{q-1}$,
where $d=\codim\linspan{\mathbf{u_1},\dots,\mathbf{u_n}}$.
This allows us to combine our second and third variants of the notion of general position:
\begin{proposition}
For $0\leq g\leq\min(m,n)$ we have
\beq
\PP[\dim\linspan{\mathbf{u_1},\dots,\mathbf{u_n}}<m-g]\leq c'\rho^{n-m}\frac{q-1}{q^{g+1}-1}
\eeq
(with $c',\rho$ as above), and also
\beq
\PP[\dim\linspan{\mathbf{u_1},\dots,\mathbf{u_n}}<m-g]\leq\sum_W\PP[\mathbf{u_1}\in W]^n
\eeq
where $W\subset V$ ranges over subspaces of codimension $g+1$.
\end{proposition}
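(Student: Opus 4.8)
The plan is to prove the two inequalities separately, both resting on the counting fact just established: writing $d=\codim\linspan{\mathbf{u_1},\dots,\mathbf{u_n}}$, the number
\beq
N=\abs{\{H;\;\mathbf{u_1},\dots,\mathbf{u_n}\in H\}}=\frac{q^d-1}{q-1}
\eeq
of hyperplanes containing all the $\mathbf{u_i}$ is a strictly increasing function of the integer $d$.

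For the second inequality I would argue exactly as in Proposition~\ref{union}, but with hyperplanes replaced by subspaces of codimension $g+1$. The event $\dim\linspan{\mathbf{u_1},\dots,\mathbf{u_n}}<m-g$ is the event $d\geq g+1$, and this holds if and only if the span is contained in some subspace $W$ of codimension $g+1$ (take any such $W$ containing the span), \ie if and only if all the $\mathbf{u_i}$ lie in a common $W$. A union bound over the finitely many such $W$, together with the independence of the $\mathbf{u_i}$ and the fact that they are identically distributed, then yields
\beq
\PP[d\geq g+1]\leq\sum_W\PP[\mathbf{u_1},\dots,\mathbf{u_n}\in W]=\sum_W\PP[\mathbf{u_1}\in W]^n,
\eeq
which is the claimed bound.

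For the first inequality I would instead treat $N$ as a nonnegative random variable and apply Markov's inequality. Since $N$ is strictly increasing in the integer $d$, the event $d\geq g+1$ is identical to the event $N\geq\frac{q^{g+1}-1}{q-1}$, so Markov gives
\beq
\PP[d\geq g+1]=\PP\left[N\geq\frac{q^{g+1}-1}{q-1}\right]\leq\frac{q-1}{q^{g+1}-1}\,\EE[N].
\eeq
It then remains to bound $\EE[N]$. By the observation preceding the statement, $\EE[N]$ is exactly the right-hand side $\sum_H\PP[\mathbf{u_1}\in H]^n$ of Proposition~\ref{union}; factoring $\PP[\mathbf{u_1}\in H]^n=\PP[\mathbf{u_1}\in H]^m\PP[\mathbf{u_1}\in H]^{n-m}$ and bounding the last factor by $\rho^{n-m}$ gives $\EE[N]\leq c'\rho^{n-m}$, just as in Corollary~\ref{cc'}. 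Substituting this back completes the first bound.

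I do not anticipate a genuine obstacle here: both parts are short consequences of the hyperplane-counting identity. The two points deserving a little care are the equivalence $d\geq g+1\Leftrightarrow N\geq\frac{q^{g+1}-1}{q-1}$, which relies on $d$ being an integer so that $N$ crosses the threshold without skipping it, and the estimate $\EE[N]\leq c'\rho^{n-m}$, which is where the regime $n\geq m$ enters (as in Corollary~\ref{cc'}); the hypothesis $g\leq\min(m,n)$ merely keeps the statement meaningful, e.g.\ at $g=m$ the event has probability $0$ and both bounds hold trivially.
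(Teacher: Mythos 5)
Your proof is correct and follows exactly the route the paper intends: for the first inequality, Markov's inequality applied to the hyperplane count $N=\frac{q^d-1}{q-1}$ (noting $d\geq g+1\Leftrightarrow N\geq\frac{q^{g+1}-1}{q-1}$) together with the estimate $\EE[N]=\sum_H\PP[\mathbf{u_1}\in H]^n\leq c'\rho^{n-m}$ from Corollary~\ref{cc'}; for the second, a union bound over subspaces of codimension $g+1$ as in Proposition~\ref{union}. The paper's own proof is only a two-line pointer to these same ingredients, so you have simply written out the details it leaves implicit.
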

\begin{proof}
The first inequality follows from the discussion above, using Markov's inequality as in \cite[Prop.~5.1]{CCMZ}.
The second is a direct approach using the union bound similar to that of Proposition~\ref{union}.
\end{proof}
Which of these two bounds is stronger, and which is more tractable, certainly depends on $V$ and $X$. 
Note also that the bounds remain valid even without the assumption $m\leq n$.

We illustrate what precedes for $X=V=\F_q^m$ with uniform distribution (this will be used later).
We introduce the converging infinite product
\beq
C_q=\prod_{j\geq1}(1-q^{-j})^{-1}.
\eeq
Numerically, $C_q\leq C_2\approx 3.463.$

We let $\gauss{m}{r}=\prod_{1\leq j\leq r}\frac{q^{m-r+j}-1}{q^j-1}$ denote the number of $r$-dimensional subspaces in $\F_q^m$.
\begin{lemma}
\label{bgc}
We have
\beq
q^{r(m-r)}\leq\gauss{m}{r}\leq C_q q^{r(m-r)}.
\eeq
\end{lemma}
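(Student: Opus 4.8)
The plan is to extract the leading power of $q$ from the product defining $\gauss{m}{r}$ and reduce both inequalities to controlling a single dimensionless ``correction factor'' lying in $[1,C_q]$. Concretely, I would divide the numerator and denominator of each factor of $\gauss{m}{r}$ by $q^{m-r+j}$ and $q^j$ respectively, writing
\beq
\frac{q^{m-r+j}-1}{q^j-1}=q^{m-r}\cdot\frac{1-q^{-(m-r+j)}}{1-q^{-j}}.
\eeq
Taking the product over $1\leq j\leq r$ and collecting the $r$ copies of $q^{m-r}$ then yields the clean identity
\beq
\gauss{m}{r}=q^{r(m-r)}\prod_{j=1}^r\frac{1-q^{-(m-r+j)}}{1-q^{-j}}.
\eeq
Everything comes down to showing that the remaining product $P=\prod_{j=1}^r\frac{1-q^{-(m-r+j)}}{1-q^{-j}}$ satisfies $1\leq P\leq C_q$.

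For the lower bound, I would argue termwise: since $r\leq m$ we have $m-r+j\geq j$ for every $j$, hence $q^{-(m-r+j)}\leq q^{-j}$ and therefore $1-q^{-(m-r+j)}\geq 1-q^{-j}>0$. Each factor of $P$ is thus at least $1$, so $P\geq 1$, giving the left-hand inequality $\gauss{m}{r}\geq q^{r(m-r)}$. For the upper bound, I would instead discard the numerators (each satisfies $1-q^{-(m-r+j)}\leq 1$) and extend the product over the denominators to infinity:
\beq
P\leq\prod_{j=1}^r\frac{1}{1-q^{-j}}\leq\prod_{j\geq1}(1-q^{-j})^{-1}=C_q,
\eeq
which is precisely the right-hand inequality.

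I do not anticipate a genuine obstacle here: once the factorization is in place, both bounds are one-line termwise comparisons, and the convergence of the infinite product $C_q$ already asserted in the text is exactly what legitimises passing from the finite denominator product to its infinite majorant. The only point requiring a moment's care is the bookkeeping in the first step---verifying that exactly $q^{r(m-r)}$ is pulled out, and that the surviving factors are ratios whose numerator and denominator each lie in $(0,1]$, so that the two comparisons run in the directions claimed.
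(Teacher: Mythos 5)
Your proof is correct and is essentially the paper's argument: the paper bounds each factor by $q^{m-r}\leq\frac{q^{m-r+j}-1}{q^j-1}\leq(1-q^{-j})^{-1}q^{m-r}$ and multiplies over $j$, which is exactly your termwise comparison written without first normalizing by $q^{m-r}$. Your presentation via the correction factor $P$ is just a slightly more verbose bookkeeping of the same bounds, so there is nothing to add.
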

\begin{proof}
From $q^{m-r}\leq\frac{q^{m-r+j}-1}{q^j-1}\leq(1-q^{-j})^{-1}q^{m-r}$.
\end{proof}

\begin{proposition}
\label{toy}
For $0\leq r\leq\min(m,n)$ and random $\mathbf{u_1},\dots,\mathbf{u_n}\in\F_q^m$ uniformly distributed, we have
\beq
\PP[\dim\linspan{\mathbf{u_1},\dots,\mathbf{u_n}}\leq r]\leq C_q q^{-(n-r)(m-r)}.
\eeq
\end{proposition}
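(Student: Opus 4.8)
The plan is to reduce everything to a union bound over $r$-dimensional subspaces, and then feed in the two explicit facts available in the uniform case: the cardinality $\abs{W}=q^r$ of such a subspace, and the Gaussian binomial estimate of Lemma~\ref{bgc}. Concretely, I would first observe that the event $\dim\linspan{\mathbf{u_1},\dots,\mathbf{u_n}}\leq r$ is exactly the event that all of the $\mathbf{u_i}$ lie in a common subspace $W$ of dimension $r$: since $r\leq m$, any span of dimension $\leq r$ can be enlarged to some $r$-dimensional $W$, so
\[
\{\dim\linspan{\mathbf{u_1},\dots,\mathbf{u_n}}\leq r\}=\bigcup_{\dim W=r}\{\mathbf{u_1},\dots,\mathbf{u_n}\in W\}.
\]
This is precisely the situation of the second inequality of the preceding Proposition, taken with $g=m-r-1$ so that the codimension $g+1=m-r$ corresponds to dimension $r$. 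Applying the union bound together with the independence of the $\mathbf{u_i}$ yields
\[
\PP[\dim\linspan{\mathbf{u_1},\dots,\mathbf{u_n}}\leq r]\leq\sum_{\dim W=r}\PP[\mathbf{u_1}\in W]^n.
\]

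Next I would make the summand explicit. For the uniform distribution on $\F_q^m$, each $r$-dimensional subspace $W$ contains $\abs{W}=q^r$ of the $q^m$ points, so $\PP[\mathbf{u_1}\in W]=q^{r-m}=q^{-(m-r)}$, independently of the choice of $W$. The sum therefore consists of $\gauss{m}{r}$ equal terms and equals $\gauss{m}{r}\,q^{-n(m-r)}$. Bounding the number of subspaces by Lemma~\ref{bgc}, namely $\gauss{m}{r}\leq C_q q^{r(m-r)}$, I obtain
\[
\PP[\dim\linspan{\mathbf{u_1},\dots,\mathbf{u_n}}\leq r]\leq C_q\,q^{r(m-r)}\,q^{-n(m-r)}=C_q\,q^{-(n-r)(m-r)},
\]
which is the claimed bound after collecting the exponent $r(m-r)-n(m-r)=-(n-r)(m-r)$.

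I do not expect a genuine obstacle: the statement is essentially a bookkeeping consequence of the union bound already set up in Section~\ref{Intro} (Proposition~\ref{union} and its refinement) together with the counting Lemma~\ref{bgc}, and the only points needing a moment's care are the exponent arithmetic and the boundary value $r=m$. In that degenerate case the left-hand side equals $1$ while the right-hand side is $C_q q^0=C_q>1$, so the inequality holds trivially and the hypothesis $0\leq r\leq\min(m,n)$ causes no difficulty. The substantive work of the paper will instead appear once the uniform distribution on all of $\F_q^m$ is replaced by a distribution supported on rank~$1$ matrices, where $\PP[\mathbf{u_1}\in W]$ is no longer constant across subspaces $W$ and must be controlled according to how $W$ meets the variety of rank~$1$ matrices.
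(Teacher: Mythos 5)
Your proof is correct and follows the same route as the paper: the paper's one-line proof invokes the preceding proposition's union bound over subspaces of codimension $m-r$ together with $\PP[\mathbf{u_1}\in W]=q^{-(m-r)}$ and Lemma~\ref{bgc}, which is exactly the computation you spell out. The only difference is that you make the bookkeeping (choice of $g$, exponent arithmetic, the trivial case $r=m$) explicit, which the paper leaves to the reader.
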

\begin{proof}
Follows from what precedes, using $\PP[\mathbf{u_1}\in W]=q^{-(m-r)}$ for $\dim W=r$.
\end{proof}

\subsection{Case $n\leq m$.}

From now on we will suppose $(X,\cL)$ is homothety invariant: given any $\lambda\in\F_q^\times$,
then for random $\mathbf{u}\in X$, we also have $\lambda\mathbf{u}\in X$, with the same distribution $\cL$.

We say a vector $\mathbf{z}=(\lambda_1,\dots,\lambda_n)\in\F_q^n$ is a linear relation for $\mathbf{u_1},\dots,\mathbf{u_n}$
if $\lambda_1\mathbf{u_1}+\cdots+\lambda_n\mathbf{u_n}=0$.

Also introduce the random variable
\beq
\mathbf{s_n}=\mathbf{u_1}+\cdots+\mathbf{u_n}\in V.
\eeq

\begin{lemma}
For any $\mathbf{z}\in\F_q^n$ of Hamming weight $w$, we have
\beq
\PP[\textrm{$\mathbf{z}$ is a linear relation for $\mathbf{u_1},\dots,\mathbf{u_n}$}]\;=\;\PP[\mathbf{s_w}=0].
\eeq
\end{lemma}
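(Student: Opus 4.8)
The plan is to reduce the linear-relation event to the sum event using only the two standing hypotheses at hand: independence of the $\mathbf{u_i}$ and homothety invariance of $(X,\cL)$. First I would write $S=\{i:\lambda_i\neq0\}$ for the support of $\mathbf{z}$, so that $\abs{S}=w$. Since the coordinates with $\lambda_i=0$ contribute nothing to the combination, the relation $\lambda_1\mathbf{u_1}+\cdots+\lambda_n\mathbf{u_n}=0$ is equivalent to $\sum_{i\in S}\lambda_i\mathbf{u_i}=0$, so I may discard all indices outside $S$ from the outset.

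The key step is to observe that for each $i\in S$ we have $\lambda_i\in\F_q^\times$, so by homothety invariance the rescaled vector $\lambda_i\mathbf{u_i}$ is again distributed according to $\cL$. Because the original family $(\mathbf{u_i})_{i\in S}$ is independent, so is the rescaled family $(\lambda_i\mathbf{u_i})_{i\in S}$; hence the latter has exactly the same \emph{joint} distribution as a family of $w$ independent $\cL$-distributed vectors. Consequently the random vector $\sum_{i\in S}\lambda_i\mathbf{u_i}$ has the same distribution as $\mathbf{s_w}=\mathbf{u_1}+\cdots+\mathbf{u_w}$, and in particular the two vanish with equal probability. Combining this with the reduction of the first step yields $\PP[\textrm{$\mathbf{z}$ is a linear relation}]=\PP[\mathbf{s_w}=0]$, which is the assertion.

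I do not expect a genuine obstacle here: the statement is essentially a bookkeeping identity. The one point requiring care is that homothety invariance must be used coordinate-by-coordinate and then combined with independence to obtain equality of joint distributions, not merely of the marginals of each $\lambda_i\mathbf{u_i}$; it is this joint statement that licenses replacing the $\lambda_i\mathbf{u_i}$ by fresh $\cL$-samples inside the sum. The role of the lemma is evidently to show that the probability of a given relation depends on $\mathbf{z}$ only through its Hamming weight $w$, collapsing it to the single one-parameter quantity $\PP[\mathbf{s_w}=0]$; this is the form one would then feed into a union bound over relations $\mathbf{z}$ grouped by weight in the regime $n\leq m$.
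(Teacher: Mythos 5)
Your argument is correct and is exactly the paper's proof, merely spelled out in more detail: restrict to the support of $\mathbf{z}$, then use homothety invariance together with independence to replace each $\lambda_i\mathbf{u_i}$ by a fresh $\cL$-sample, so the sum is distributed as $\mathbf{s_w}$. The paper compresses this into one sentence but relies on precisely the same two ingredients.
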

\begin{proof}
We may suppose $\mathbf{z}$ has support $\{1,\dots,w\}$, and we conclude since $\mathbf{u_i}$ and $\lambda_i\mathbf{u_i}$
have same distribution for $\lambda_i\neq0$.
\end{proof}

\begin{proposition}
\label{ssw}
We have
\beq
\PP(n)\leq\sum_{w\geq1}\binom{n}{w}(q-1)^{w-1}\PP[\mathbf{s_w}=0]
\eeq
\end{proposition}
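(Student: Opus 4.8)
The plan is to bound the failure event for $n \leq m$ by the existence of a nonzero linear relation, and then apply the union bound over all candidate relation vectors, grouping them by Hamming weight and exploiting homothety invariance through the preceding lemma.

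First I would observe that, since $n \leq m$, we have $\min(n, \dim V) = n$, so $\PP(n) = \PP[\dim\linspan{\mathbf{u_1},\dots,\mathbf{u_n}} < n]$, which is exactly the probability that $\mathbf{u_1},\dots,\mathbf{u_n}$ are linearly \emph{dependent}. A family is linearly dependent precisely when there exists a nonzero linear relation $\mathbf{z} = (\lambda_1,\dots,\lambda_n) \in \F_q^n \setminus \{0\}$ with $\lambda_1\mathbf{u_1} + \cdots + \lambda_n\mathbf{u_n} = 0$. Hence the event $\{\dim\linspan{\mathbf{u_1},\dots,\mathbf{u_n}} < n\}$ equals the union, over nonzero $\mathbf{z}$, of the events ``$\mathbf{z}$ is a linear relation for $\mathbf{u_1},\dots,\mathbf{u_n}$''.

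Next I would apply the union bound over all nonzero $\mathbf{z} \in \F_q^n$:
\beq
\PP(n) \leq \sum_{\mathbf{z} \neq 0} \PP[\textrm{$\mathbf{z}$ is a linear relation for $\mathbf{u_1},\dots,\mathbf{u_n}$}].
\eeq
Now I would group the summands by the Hamming weight $w$ of $\mathbf{z}$, noting that $w$ ranges from $1$ to $n$. By the preceding lemma, the summand for each $\mathbf{z}$ of weight $w$ equals $\PP[\mathbf{s_w} = 0]$, which depends only on $w$ and not on the particular $\mathbf{z}$. It remains to count the number of vectors of weight $w$: there are $\binom{n}{w}$ choices for the support, and on that support each of the $w$ coordinates is a nonzero scalar, giving $(q-1)^w$ vectors of weight $w$.

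This last count is where the stated inequality has a factor of $(q-1)^{w-1}$ rather than $(q-1)^w$, and that is the one point requiring care. The gain of a factor $(q-1)$ comes from the fact that a linear relation $\mathbf{z}$ and any of its nonzero scalar multiples $\mu\mathbf{z}$ cut out the same event, so the vectors of weight $w$ partition into classes of size $q-1$ under scaling, each class contributing a single distinct relation event; equivalently one may normalize the first nonzero coordinate of $\mathbf{z}$ to be $1$, leaving $\binom{n}{w}(q-1)^{w-1}$ representatives. Collecting these counts yields
\beq
\PP(n) \leq \sum_{w=1}^{n} \binom{n}{w}(q-1)^{w-1}\,\PP[\mathbf{s_w} = 0],
\eeq
and since the summand vanishes (or may be harmlessly included) for $w > n$, this is exactly the asserted bound with the sum written over all $w \geq 1$. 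The main subtlety to get right is precisely this normalization argument that replaces $(q-1)^w$ by $(q-1)^{w-1}$; everything else is a routine union bound combined with the homothety-invariance lemma.
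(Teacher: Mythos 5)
Your proposal is correct and follows essentially the same route as the paper: a union bound over nonzero relation vectors grouped by Hamming weight, invoking the preceding lemma to identify each term with $\PP[\mathbf{s_w}=0]$, and counting relations only up to proportionality to obtain the factor $(q-1)^{w-1}$ --- which is exactly the parenthetical remark in the paper's own (much terser) proof. Nothing is missing.
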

\begin{proof}
Union bound, as in Proposition~\ref{union} (note that we may count linear relations only up to proportionality).
\end{proof}

Likewise, Markov's inequality gives, for any $g\geq0$,
\beq
\PP(\dim\linspan{\mathbf{u_1},\!\dots,\!\mathbf{u_n}}\!<\!n-g)\leq{\textstyle\frac{1}{q^{g+1}\!-\!1}}\sum_{w\geq1}\!{\textstyle\binom{n}{w}}(q-1)^w\PP[\mathbf{s_w}\!=\!0].
\eeq

In these sums we expect the contribution of linear relations of large weight should stay under control thanks to:
\begin{proposition}
\label{asympt}
As $w\to\infty$ we have
\beq
\PP[\mathbf{s_w}=0]\to\frac{1}{q^m},
\eeq
except for $q=2$ and $X$ contained in the translate of an hyperplane, in which case we have $\PP[\mathbf{s_w}=0]$ for odd $w$,
and $\PP[\mathbf{s_w}=0]\to\frac{1}{2^{m-1}}$ for even $w\to\infty$.
\end{proposition}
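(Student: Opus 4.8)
The plan is to treat $\mathbf{s_w}$ as a random walk on the finite abelian group $(V,+)$ and apply discrete Fourier analysis. Write $\widehat V$ for the group of additive characters $\chi\colon V\to\mathbb C^\times$, and for each $\chi$ set $\widehat{\cL}(\chi)=\EE[\chi(\mathbf{u_1})]=\sum_{\mathbf x}\cL(\mathbf x)\chi(\mathbf x)$. Since the $\mathbf{u_i}$ are independent and $\chi$ is multiplicative, $\EE[\chi(\mathbf{s_w})]=\widehat{\cL}(\chi)^w$, so orthogonality of characters yields the exact formula
\beq
\PP[\mathbf{s_w}=0]=\frac{1}{q^m}\sum_{\chi\in\widehat V}\widehat{\cL}(\chi)^w.
\eeq
The trivial character contributes the term $1$, hence the expected limit $q^{-m}$; the whole statement thus reduces to controlling the remaining finitely many terms as $w\to\infty$.

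Here I would use that each $\widehat{\cL}(\chi)$ is a convex combination of complex numbers of modulus~$1$, so $\abs{\widehat{\cL}(\chi)}\le1$, with equality exactly when $\chi$ is constant on the support $X$. The key point is that for $q>2$ no \emph{nontrivial} character can be constant on $X$. Indeed, if $\chi$ is constant on $X$ then homothety invariance gives $\chi(\lambda\mathbf x)=\chi(\mathbf x)$, i.e. $\chi((\lambda-1)\mathbf x)=1$, for all $\mathbf x\in X$ and $\lambda\in\F_q^\times$; thus $\{\mu\in\F_q:\chi(\mu\mathbf x)=1\}$ is an additive subgroup of $\F_q$ containing the $q-1$ elements of $\F_q\moins\{-1\}$, which for $q>2$ forces it to be all of $\F_q$ and in particular $\chi(\mathbf x)=1$. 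Since $X$ is $\F_q^\times$-stable and spans $V$, writing any vector as an $\F_q$-combination of elements of $X$ then propagates $\chi\equiv1$ to all of $V$, so $\chi$ is trivial. Consequently every nontrivial term has $\abs{\widehat{\cL}(\chi)}<1$ and tends to~$0$, giving $\PP[\mathbf{s_w}=0]\to q^{-m}$.

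For $q=2$ homothety invariance is vacuous, which is exactly where the exception arises. A nontrivial $\chi$ constant on $X$ means $X$ lies in an affine hyperplane; the spanning hypothesis forces this hyperplane to miss the origin and, moreover, shows that there can be at most one such $\chi$, of constant value $-1$ (the only nontrivial value available over $\F_2$). When none exists the previous argument again gives the limit $2^{-m}$; when the exceptional $\chi_{\mathbf a_0}$ exists its term contributes $(-1)^w$, so
\beq
\PP[\mathbf{s_w}=0]=\frac{1}{2^m}\Big(1+(-1)^w+\!\!\sum_{\chi\neq\chi_0,\chi_{\mathbf a_0}}\!\!\widehat{\cL}(\chi)^w\Big),
\eeq
which tends to $2^{-(m-1)}$ for even $w$ and vanishes identically for odd $w$ (then $\langle\mathbf a_0,\mathbf{s_w}\rangle=w\neq0$, so $\mathbf{s_w}\neq0$).

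The main obstacle I anticipate is the equality analysis of $\abs{\widehat{\cL}(\chi)}=1$ combined with the arithmetic of homothety invariance: squeezing out every nontrivial constant character when $q>2$, and, for $q=2$, carefully using the spanning condition to certify that at most one exceptional character survives and to pin down its value as $-1$. The convergence itself is then immediate, being a finite sum of terms whose bases have modulus~$<1$.
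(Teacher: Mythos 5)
Your proof is correct, but it takes a genuinely different route from the paper. The paper treats $\mathbf{s_w}$ as a random walk on the finite group $V$ and invokes the Markov chain convergence theorem: irreducibility comes from $X$ spanning $V$ (as a group, thanks to homothety invariance), and aperiodicity from exhibiting $0$ both as a sum of two elements of $X$ ($\mathbf{s}+(-\mathbf{s})$) and of three elements ($(1-\lambda)\mathbf{s}+(-\mathbf{s})+\lambda\mathbf{s}$ for some $\lambda\neq0,1$), which is exactly where $q>2$ is needed; the $q=2$ case is handled as ``similar, with a tweak on aperiodicity.'' You instead expand $\PP[\mathbf{s_w}=0]=q^{-m}\sum_\chi\widehat{\cL}(\chi)^w$ over additive characters and analyze the equality case of $\abs{\widehat{\cL}(\chi)}\leq1$; your use of homothety invariance (the subgroup $\{\mu:\chi(\mu\mathbf{x})=1\}$ containing $\F_q\moins\{-1\}$, hence all of $\F_q$ when $q>2$) plays precisely the role that aperiodicity plays in the paper, and both arguments isolate the same exceptional configuration at $q=2$. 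What your approach buys is strictly more information: an exact formula, an exponential rate of convergence governed by $\max_{\chi\neq\chi_0}\abs{\widehat{\cL}(\chi)}$, and a cleaner, fully explicit treatment of the exceptional case (identifying the unique surviving character of constant value $-1$ and the exact vanishing for odd $w$), whereas the paper's argument is shorter but only qualitative. Both proofs rest on the same implicit hypothesis that the support of $\cL$ is all of $X$ (or at least spans $V$), so no gap there.
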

\begin{proof}
We treat first the case $q>2$, so there is a $\lambda\neq0,1$ in $\F_q$.
The $\mathbf{s_w}$ form a random walk on the finite commutative group $V$.
Seen as a Markov chain, it is irreducible, because $X$ spans $V$ (as a vector space, but also as a group, since $X$ is homothety-invariant).
Moreover it is aperiodic, because the zero vector can be written as a sum of $2$ elements of $X$ (\eg $\mathbf{s}+(-\mathbf{s})$),
and also as a sum of $3$ elements of $X$ (\eg $(1-\lambda)\mathbf{s}+(-\mathbf{s})+\lambda\mathbf{s}$).
So it converges to its unique stationnary distribution, which can only be uniform.

%
The case $q=2$ is similar, with a tweak on aperiodicity.
\end{proof}

\section{Rank~$1$ matrices}

A matrix $\mathbf{u}\in\F_q^{k\times\ell}$ is of rank~$1$ iff it can be written $\mathbf{u}=\mathbf{p}\mathbf{q}^T$
for column vectors $\mathbf{p}\in\F_q^k\moins\{\mathbf{0}\}$, $\mathbf{q}\in\F_q^\ell\moins\{\mathbf{0}\}$.
Moreover these $\mathbf{p},\mathbf{q}$ are uniquely determined up to a scalar.
This means, choosing random $\mathbf{p}\in\F_q^k\moins\{\mathbf{0}\}$, $\mathbf{q}\in\F_q^\ell\moins\{\mathbf{0}\}$ uniformly,
and setting $\mathbf{u}=\mathbf{p}\mathbf{q}^T$, gives a random matrix of rank~$1$ with uniform distribution.

Actually we will use a slightly different model. Let 
\beq
\XX=\{\mathbf{u}\in\F_q^{k\times\ell};\;\rk\mathbf{u}\leq 1\}
\eeq
be the set of rank~$1$ matrices together with the zero matrix.
Pick random $\mathbf{p}\in\F_q^k$, $\mathbf{q}\in\F_q^\ell$ uniformly (possibly zero), 
and set $\mathbf{u}=\mathbf{p}\mathbf{q}^T$. This gives our distribution $\cL$ on $\XX$.

Note that if $\mathbf{u}\in \XX$ is distributed according to $\cL$,
then conditioning on the event $\mathbf{u}\neq\mathbf{0}$ gives back the uniform distribution on matrices of rank~$1$.
Conversely, if $b$ is a Bernoulli variable of parameter $\PP[b=1]=(1-q^{-k})(1-q^{-\ell})$, and if $\mathbf{u}$ is a random
uniformly distributed matrix of rank~$1$, then $b\mathbf{u}\in \XX$ is distributed according to $\cL$.
Moreover, replacing $\mathbf{u_1},\dots,\mathbf{u_n}$ with $b_1\mathbf{u_1},\dots,b_n\mathbf{u_n}$ can only decrease the dimension
of their linear span.
As a consequence, any upper bound on $\PP(n)$ for $(\XX,\cL)$ will also be an upper bound for uniformly distributed matrices of rank~$1$.


\begin{lemma}
\label{bilin}
\begin{enumerate}[(i)]
\item
Every linear form on $\F_q^{k\times\ell}$ is of the form $l_{\mathbf{B}}=\tr(\mathbf{B}^T\cdot)$ for a uniquely determined $\mathbf{B}\in\F_q^{k\times\ell}$. 
\item
The number of $\mathbf{B}\in\F_q^{k\times\ell}$ of rank $r$ is
\beq\textstyle
\gauss{k}{r}\gauss{\ell}{r}\abs{\mathrm{GL}_r(\F_q)}\leq C_qq^{r(k+\ell-r)}.
\eeq
\item
Given $\mathbf{B}\in\F_q^{k\times\ell}$ of rank $r$, then for random $\mathbf{u}=\mathbf{p}\mathbf{q}^T$ in $\XX$ we have
$\PP[l_{\mathbf{B}}(\mathbf{u})=0]=\frac{1}{q}\left(1+\frac{q-1}{q^r}\right)$.
\end{enumerate}
\end{lemma}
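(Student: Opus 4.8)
The plan is to handle the three items separately, since (i) and (iii) are essentially direct computations while (ii) is a classical count that needs a little care to land the stated constant.

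For (i), I would observe that $(\mathbf{B},\mathbf{u})\mapsto\tr(\mathbf{B}^T\mathbf{u})=\sum_{i,j}b_{ij}u_{ij}$ is the Frobenius pairing, a bilinear form on $\F_q^{k\times\ell}$. Feeding in the matrix units $\mathbf{u}=E_{ij}$ recovers the entry $b_{ij}$, so $l_{\mathbf{B}}=0$ forces $\mathbf{B}=\mathbf{0}$; hence $\mathbf{B}\mapsto l_{\mathbf{B}}$ is injective, and since $\dim\F_q^{k\times\ell}=k\ell$ equals the dimension of the dual space, it is an isomorphism. This yields both existence and uniqueness of $\mathbf{B}$.

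For (ii), I would count rank-$r$ matrices through the factorization $\mathbf{B}=\mathbf{C}\mathbf{R}$ with $\mathbf{C}\in\F_q^{k\times r}$ of full column rank and $\mathbf{R}\in\F_q^{r\times\ell}$ of full row rank; two factorizations of the same $\mathbf{B}$ differ exactly by $\mathbf{C}\mapsto\mathbf{C}\mathbf{g}^{-1}$, $\mathbf{R}\mapsto\mathbf{g}\mathbf{R}$ for $\mathbf{g}\in\mathrm{GL}_r(\F_q)$. Writing $N^r_k=\prod_{i=0}^{r-1}(q^k-q^i)$ for the number of $k\times r$ matrices of rank $r$ (ordered $r$-tuples of independent vectors in $\F_q^k$), the count is $N^r_k N^r_\ell/\abs{\mathrm{GL}_r(\F_q)}$, and since $N^r_k=\gauss{k}{r}\abs{\mathrm{GL}_r(\F_q)}$ this equals $\gauss{k}{r}\gauss{\ell}{r}\abs{\mathrm{GL}_r(\F_q)}$. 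For the inequality I would pair the $\mathrm{GL}_r$ factor with a single binomial: $\gauss{k}{r}\abs{\mathrm{GL}_r(\F_q)}=N^r_k=\prod_{i=0}^{r-1}(q^k-q^i)\leq q^{rk}$, and then bound the remaining $\gauss{\ell}{r}\leq C_q q^{r(\ell-r)}$ by Lemma~\ref{bgc}, giving $C_q q^{r(k+\ell-r)}$ with only one factor of $C_q$.

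For (iii), the key identity is $l_{\mathbf{B}}(\mathbf{p}\mathbf{q}^T)=\tr(\mathbf{B}^T\mathbf{p}\mathbf{q}^T)=\mathbf{p}^T\mathbf{B}\mathbf{q}$, by cyclic invariance of the trace. I would then condition on $\mathbf{q}$: the map $\mathbf{p}\mapsto\mathbf{p}^T(\mathbf{B}\mathbf{q})$ is the zero form when $\mathbf{B}\mathbf{q}=\mathbf{0}$ (vanishing with probability $1$ over $\mathbf{p}$) and a nonzero linear form otherwise (vanishing with probability $1/q$). Since $\ker\mathbf{B}$ has dimension $\ell-r$, we get $\PP[\mathbf{B}\mathbf{q}=\mathbf{0}]=q^{-r}$, whence $\PP[l_{\mathbf{B}}(\mathbf{u})=0]=q^{-r}+(1-q^{-r})\tfrac1q=\tfrac1q\bigl(1+\tfrac{q-1}{q^r}\bigr)$, as claimed. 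The only step demanding any thought is the constant in (ii): the count itself is standard, but one must absorb $\abs{\mathrm{GL}_r(\F_q)}$ into exactly one Gaussian binomial rather than bounding all three factors separately, so as to keep a single $C_q$ rather than a power of it; parts (i) and (iii) present no real obstacle.
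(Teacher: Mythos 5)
Your proof is correct and follows essentially the same route as the paper: the differences are cosmetic (you count rank-$r$ matrices via the full-rank factorization $\mathbf{B}=\mathbf{C}\mathbf{R}$ modulo $\mathrm{GL}_r$ rather than via kernel, image and induced isomorphism, and in (iii) you condition on $\mathbf{B}\mathbf{q}$ rather than on $\mathbf{p}^T\mathbf{B}$, which are mirror-image arguments yielding the same computation). Your explicit absorption of $\abs{\mathrm{GL}_r(\F_q)}$ into a single Gaussian binomial to get one factor of $C_q$ is a nice touch that the paper leaves implicit behind ``works as in the proof of Lemma~\ref{bgc}.''
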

\begin{proof}
Point (i) is clear. For point (ii) we view $\mathbf{B}$ as a linear map $\F_q^k\longto\F_q^\ell$,
and we note that it is entirely determined by
its kernel $\ker\mathbf{B}\subset\F_q^k$ of codimension $r$, 
its image $\im\mathbf{B}\subset\F_q^\ell$ of dimension $r$, 
and the isomorphism $\F_q^k/\ker\mathbf{B}\simeq\im\mathbf{B}$ it induces. 
This gives the formula of the LHS, and the upper bound works as in the proof of Lemma~\ref{bgc}.
For (iii) we note $l_{\mathbf{B}}(\mathbf{u})=0$ means $\mathbf{p}^T\mathbf{B}\mathbf{q}=0$, which happens
precisely when $\mathbf{p}^T\mathbf{B}=0$ (of probability $q^{-r}$) or when $\mathbf{q}$ is orthogonal to $\mathbf{p}^T\mathbf{B}\neq0$
(of probability $q^{-1}(1-q^{-r})$).
\end{proof}

For some of our results we will restrict to matrices whose long side grows at most exponentially in the short side.
More precisely, for any $\epsilon,\kappa>0$, we introduce the parameter space
\beq
\cP(\epsilon,\kappa)=\left\{(k,\ell);\;2\leq k\leq\ell\leq\frac{\epsilon q^{\kappa k}}{(q-1)k}\right\}.
\eeq

Now we fix a $\kappa>0$ small enough so that $q^{(1-\kappa)^2}\geq1+\frac{q-1}{q}$ (for instance $\kappa=0.23$ works for any $q$),
as well as some $0<\epsilon<1$.
\begin{theorem}
\label{n>kl}
Let $(k,\ell)\in\cP(\epsilon,\kappa)$ and $n\geq k\ell$. Then for random $\mathbf{u_1},\dots,\mathbf{u_n}\in \XX$ we have
\beq
\PP[\mathbf{u_1},\dots,\mathbf{u_n}\textrm{ don't span }\F_q^{k\times\ell}]\leq c''\rho^{n-k\ell}
\eeq
with $\rho=\frac{1}{q}\left(1+\frac{q-1}{q}\right)$ and $c''=\frac{qC_q}{(q-1)^2}\left(1+\frac{1}{1-\epsilon}\right)$.
\end{theorem}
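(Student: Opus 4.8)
Since $n\ge k\ell=\dim\F_q^{k\times\ell}$, the plan is to apply Corollary~\ref{cc'} with $m=k\ell$: it then suffices to check that the constant $\rho$ of the statement equals the maximal hyperplane-incidence probability $\max_H\PP[\mathbf{u_1}\in H]$, and to bound $c'=\sum_H\PP[\mathbf{u_1}\in H]^{k\ell}$ by the claimed $c''$, after which $\PP(n)\le c'\rho^{n-k\ell}\le c''\rho^{n-k\ell}$. By Lemma~\ref{bilin}(i) the hyperplanes correspond to nonzero $\mathbf{B}$ up to scalars, and by Lemma~\ref{bilin}(iii) the incidence probability $\PP[l_{\mathbf{B}}(\mathbf{u_1})=0]$ depends only on $r=\rk\mathbf{B}$, equalling $P_r=\frac1q(1+\frac{q-1}{q^r})$. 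As $P_r$ decreases in $r$, its maximum $P_1=\frac1q(1+\frac{q-1}{q})$ is exactly $\rho$.

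First I would stratify $c'$ by $r=\rk\mathbf{B}$. Accounting for the $q-1$ scalar multiples that produce each hyperplane and invoking the rank count of Lemma~\ref{bilin}(ii), $c'=\frac1{q-1}\sum_r N_r P_r^{k\ell}\le\frac{C_q}{q-1}\sum_{r=1}^{k}q^{r(k+\ell-r)}P_r^{k\ell}$ (say $k\le\ell$). The decisive simplification is the identity $r(k+\ell-r)-k\ell=-(k-r)(\ell-r)$, which recasts each summand as $\frac{C_q}{q-1}\,q^{-(k-r)(\ell-r)}\bigl(1+\tfrac{q-1}{q^r}\bigr)^{k\ell}$: a super-exponentially small factor times a rank-factor that is close to $(q\rho)^{k\ell}$ for small $r$ and to $1$ for large $r$.

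The core is a two-regime estimate split at $r_0=\kappa k$, exactly where both hypotheses meet. For $r\le r_0$ I would bound $1+\frac{q-1}{q^r}\le q\rho\le q^{(1-\kappa)^2}$ using the choice of $\kappa$, so the summand is $\le\frac{C_q}{q-1}q^{-(k-r)(\ell-r)+(1-\kappa)^2k\ell}$; one checks the exponent is $\le 0$ at $r=r_0$ (it equals $-\kappa(1-\kappa)k(\ell-k)$) and increases with $r$, so these terms decay geometrically down from $r_0$ and sum to at most $\frac{qC_q}{(q-1)^2}$. For $r>r_0$ the shape constraint $(k,\ell)\in\cP(\epsilon,\kappa)$, namely $(q-1)k\ell\le\epsilon q^{\kappa k}$, yields $k\ell\frac{q-1}{q^r}\le\epsilon q^{\kappa k-r}\le\epsilon$, hence $\bigl(1+\frac{q-1}{q^r}\bigr)^{k\ell}\le e^{\epsilon}\le\frac1{1-\epsilon}$, while $\sum_{r>r_0}q^{-(k-r)(\ell-r)}\le\frac{q}{q-1}$; together these give at most $\frac{qC_q}{(q-1)^2}\cdot\frac1{1-\epsilon}$. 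Adding the two pieces reproduces $c''$. I expect the main difficulty to lie precisely here: balancing the growth of the rank counts $N_r\sim q^{r(k+\ell-r)}$ against the decay of $P_r^{k\ell}$ \emph{uniformly} over all of $\cP(\epsilon,\kappa)$, and placing the threshold $r_0$ so that the two geometric sums close up to exactly the stated $c''$ rather than to some unspecified constant.
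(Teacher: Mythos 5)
Your proposal is correct and follows essentially the same route as the paper: Corollary~\ref{cc'} combined with Lemma~\ref{bilin}, stratification of $c'$ by $\rk\mathbf{B}$, and the split at $r_0=\kappa k$ with the same two-regime estimates (your $e^{\epsilon}\leq\frac{1}{1-\epsilon}$ step replaces the paper's Bernoulli-type bound, an immaterial difference). All the computations, including the exponent identity and the geometric summations yielding $c''$, check out.
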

\begin{proof}
We apply Corollary~\ref{cc'}, where from Lemma~\ref{bilin} we get $\rho=\frac{1}{q}\left(1+\frac{q-1}{q}\right)$ and
\beq
\begin{split}
c'&\leq\frac{1}{q-1}\sum_{1\leq r\leq k}C_qq^{r(k+\ell-r)}\textstyle\left(\frac{1}{q}\left(1+\frac{q-1}{q^r}\right)\right)^{k\ell}\\
&=\frac{C_q}{q-1}\sum_{1\leq r\leq k}\textstyle\frac{\left(1+\frac{q-1}{q^r}\right)^{k\ell}}{q^{(k-r)(\ell-r)}}.
\end{split}
\eeq
We set $r_0=\lfloor\kappa k\rfloor$ and split this last sum in two.

First, for $r\leq r_0$ we have $(k-r)(\ell-r)\geq(1-\kappa)^2k\ell+(r_0-r)$ and $1+\frac{q-1}{q^r}\leq1+\frac{q-1}q$,
so, by our condition on $\kappa$, $\frac{\left(1+\frac{q-1}{q^r}\right)^{k\ell}}{q^{(k-r)(\ell-r)}}\leq\frac{1}{q^{r_0-r}}$.

On the other hand, for $r>r_0$ we have $\left(1+\frac{q-1}{q^r}\right)^{k\ell}<\left(1+\frac{q-1}{q^{\kappa k}}\right)^{k\ell}\leq\frac{1}{1-\frac{k\ell(q-1)}{q^{\kappa k}}}\leq\frac{1}{1-\epsilon}$.

We deduce:
\beq
c'<\frac{C_q}{q-1}\left(\sum_{1\leq r\leq r_0}\!\frac{1}{q^{r_0-r}}+\frac{1}{1\!-\!\epsilon}\sum_{r_0< r\leq k}\!\frac{1}{q^{(k-r)(\ell-r)}}\right)\leq c''.
\eeq
\end{proof}

Given $k\leq\ell$ and random $\mathbf{u_i}\in\XX$, recall for all $w\geq1$ we set
$\mathbf{s_w}=\mathbf{u_1}+\cdots+\mathbf{u_w}\in\F_q^{k\times\ell}$.
\begin{theorem}
\label{Psw}
\begin{enumerate}[(i)]
\item For $1\leq w<k+\ell$ we have
\beq
\PP[\mathbf{s_w}=0]\leq\frac{2qC_q/(q-1)}{q^{kw/2}}.
\eeq
\item For $w\geq k+\ell$ we have
\beq
\PP[\mathbf{s_w}=0]\leq\frac{C_q(1-q^{-(w-\ell)})^{-1}}{q^{k\ell}}.
\eeq
\end{enumerate}
\end{theorem}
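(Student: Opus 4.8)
The plan is to compute $\PP[\mathbf{s_w}=0]$ exactly by Fourier analysis on the additive group $V=\F_q^{k\times\ell}$, and then to estimate the resulting rank-sum separately in the two ranges of $w$. Fix a nontrivial additive character $\psi_0$ of $\F_q$; by Lemma~\ref{bilin}(i) the characters of $V$ are precisely $\mathbf{u}\mapsto\psi_0(l_{\mathbf{B}}(\mathbf{u}))$ for $\mathbf{B}\in\F_q^{k\times\ell}$, so orthogonality together with the independence of the $\mathbf{u_i}$ gives
\beq
\PP[\mathbf{s_w}=0]=\frac{1}{q^{k\ell}}\sum_{\mathbf{B}}\EE[\psi_0(l_{\mathbf{B}}(\mathbf{u_1}))]^w.
\eeq
A computation in the spirit of Lemma~\ref{bilin}(iii) --- putting $\mathbf{B}$ in rank normal form and using $\EE[\psi_0(pq)]=1/q$ for independent uniform $p,q\in\F_q$ --- shows $\EE[\psi_0(l_{\mathbf{B}}(\mathbf{u_1}))]=q^{-\rk\mathbf{B}}$. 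Grouping by rank and writing $N_r=\gauss{k}{r}\gauss{\ell}{r}\abs{\mathrm{GL}_r(\F_q)}$ for the number of rank-$r$ matrices yields the key identity
\beq
\PP[\mathbf{s_w}=0]=\frac{1}{q^{k\ell}}\sum_{r=0}^{k}N_r\,q^{-rw},
\eeq
with $N_r\le C_q q^{r(k+\ell-r)}$ by Lemma~\ref{bilin}(ii). Everything then reduces to bounding $\Sigma=\sum_{r=0}^{k}N_r q^{-rw}$, whose $r$-th term is at most $C_q q^{h(r)}$ with $h(r)=r(k+\ell-w-r)$ a downward parabola in $r$ peaking at $r^{\ast}=(k+\ell-w)/2$; thus $\Sigma$ is controlled by a geometric series issuing from the integer nearest its (constrained) maximiser, and the two cases of the theorem correspond to $r^{\ast}>0$ versus $r^{\ast}\le0$.

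For part~(i), where $1\le w<k+\ell$ and hence $r^{\ast}>0$, I would take the maximiser at $r_{\max}=\min(r_0,k)$, with $r_0$ the nearest integer to $r^{\ast}$. The arithmetic core is the inequality $h(r_{\max})\le k\ell-kw/2$, which I would check in the two subcases $r^{\ast}\le k$ --- where $h$ attains $(k+\ell-w)^2/4$ and the claim is equivalent to $(\ell-w)^2\le k(2\ell-k)$, valid since $\lvert\ell-w\rvert\le k$ in this range and $k\le\ell$ --- and $r^{\ast}>k$, where $h(k)=k(\ell-w)\le k\ell-kw/2$. Summing the two tails of $\Sigma$ about $r_{\max}$ as geometric series of ratio at most $q^{-1}$ produces a factor $\tfrac{2q}{q-1}$, which together with the prefactor $C_q q^{-k\ell}$ gives the stated bound.

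For part~(ii), where $w\ge k+\ell$ forces $r^{\ast}\le0$, the sequence is decreasing on $\{0,\dots,k\}$ and is led by its $r=0$ term. Here the point is to keep that term \emph{exactly}: since $N_0=1$ it contributes precisely $q^{-k\ell}$, the main term. For the tail I would rewrite $N_r q^{-rw}=\gauss{k}{r}\,q^{-r(w-\ell)}\prod_{j=0}^{r-1}(1-q^{j-\ell})$, so that (using $\prod_{j=0}^{r-1}(1-q^{j-\ell})\le1$) the sum is at most $\sum_{r=0}^{k}\gauss{k}{r}x^{r}$ with $x=q^{-(w-\ell)}\le q^{-k}$; exhibiting this ratio $x$ is what ultimately yields the factor $(1-q^{-(w-\ell)})^{-1}$.

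The overall strategy is routine; I expect the real difficulty to lie in the constants rather than the exponents. Obtaining the clean single $C_q$ (rather than $C_q^2$) forces one to use the sharp form $N_r\le C_q q^{r(k+\ell-r)}$ of Lemma~\ref{bilin}(ii) and to treat the extremal index ($r_{\max}$ in part~(i), $r=0$ in part~(ii)) exactly. The most delicate point is the $w$-dependent factor $(1-q^{-(w-\ell)})^{-1}$ in part~(ii): it must tend to $1$ as $w\to\infty$ to be consistent with Proposition~\ref{asympt}, so the crude bound $\gauss{k}{r}\le C_q q^{r(k-r)}$ --- which over-counts the small-$r$ terms and only gives an order-correct constant --- cannot be used there. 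Instead one must retain the Gaussian binomials faithfully and establish an inequality of the shape $\sum_{r=0}^{k}\gauss{k}{r}x^{r}\le C_q(1-x)^{-1}$ for $0\le x\le q^{-k}$, a $q$-series estimate; a similar vigilance near the apex is what secures the factor $\tfrac{2q}{q-1}$ in part~(i).
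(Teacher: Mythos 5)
Your proposal is correct, and it takes a genuinely different route from the paper's. The paper never computes $\PP[\mathbf{s_w}=0]$ exactly: writing $\mathbf{u_i}=\mathbf{p_i}\mathbf{q_i}^T$ and stacking the $\mathbf{p_i}$ (resp.\ the $\mathbf{q_i}$) into a $k\times w$ (resp.\ $\ell\times w$) matrix, it observes that $\mathbf{s_w}=0$ iff the two row spaces are orthogonal in $\F_q^w$, conditions on $e=\dim\linspan{\mathbf{y_1},\dots,\mathbf{y_\ell}}$, and combines Proposition~\ref{toy} with the conditional orthogonality probability $q^{-ke}$ to get $\PP[\mathbf{s_w}=0]\leq C_q\sum_e q^{-f(e)}$, $f(e)=ke+(\ell-e)(w-e)$. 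Your character-sum identity $\PP[\mathbf{s_w}=0]=q^{-k\ell}\sum_r N_r q^{-rw}$ is correct (the computation $\EE[\psi_0(l_{\mathbf{B}}(\mathbf{u}))]=q^{-\rk\mathbf{B}}$ is right), and after the bound $N_r\leq C_q q^{r(k+\ell-r)}$ it produces literally the same sum under the substitution $e=\ell-r$, since $f(\ell-r)=k\ell-h(r)$; so your parabola analysis in part (i) mirrors the paper's study of $f$ about $e_0=(\ell+w-k)/2$, and your arithmetic there checks out. The one phrase to tighten is ``geometric series of ratio at most $q^{-1}$'': when $k+\ell-w$ is odd the apex value of $h$ is attained at two consecutive integers, so you should instead invoke concavity of $h$ on $\Z$ (second difference $-2$), which still yields the factor $2q/(q-1)$. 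Where your route genuinely buys something is part (ii): because your identity is exact, the leading term is $N_0=1$ rather than $C_q q^{-f(\ell)}=C_q q^{-k\ell}$, leaving slack $C_q-1\geq\frac{1}{q-1}$ at the constant level; your flagged inequality $\sum_{r\geq0}\gauss{k}{r}x^r\leq C_q(1-x)^{-1}$ for $0\leq x\leq q^{-k}$ does hold (bound the $r=1$ term exactly by $\frac{1-q^{-k}}{q-1}$, the terms $r\geq2$ by $C_q\sum_{r\geq2}q^{-r^2}\leq q^{-2}$, and use $C_q\geq\frac{q}{q-1}(1+q^{-2})$), so your proof closes. By contrast, the paper's displayed minoration $f(e)\geq k\ell+(e-k)(w-\ell)$ fails at $e=\ell$ when $k<\ell<w$ (there $f(\ell)=k\ell$ exactly), and since the $e=\ell-1$ term of the paper's sum already contributes $C_q q^{-k\ell}\cdot q^{k-1}q^{-(w-\ell)}$, the stated constant $(1-q^{-(w-\ell)})^{-1}$ cannot be extracted from $C_q\sum_e q^{-f(e)}$ alone; your exact identity sidesteps this. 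The cost of your approach is having to establish the rank-distribution formula, but that is a pleasant byproduct: it is precisely the computation of $N^{k\times\ell}_q(0,w)$ via the weight distribution of $(S_k\tens S_\ell)^\perp$ that the paper later poses as an exercise.
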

\begin{proof}
Write $\mathbf{u_i}=\mathbf{p}_i\mathbf{q_i}^T$ with $\mathbf{p_i}\in\F_q^k$, $\mathbf{q_i}\in\F_q^\ell$ uniform.
Let $G$ be the $k\times w$ matrix whose columns are $\mathbf{p_1},\dots,\mathbf{p_w}$, and let $\mathbf{x_1},\dots,\mathbf{x_k}\in\F_q^w$ be its rows.
Likewise let $G'$ be the $\ell\times w$ matrix whose columns are $\mathbf{q_1},\dots,\mathbf{q_w}$, and let $\mathbf{y_1},\dots,\mathbf{y_\ell}\in\F_q^w$ be its rows.
Note these $\mathbf{x}$'s and $\mathbf{y}$'s are uniform and independent.
Also our key observation is that $\mathbf{s_w}=0$ iff
$\linspan{\mathbf{x_1},\dots,\mathbf{x_k}}\perp\linspan{\mathbf{y_1},\dots,\mathbf{y_\ell}}$ in $\F_q^w$.

Now we condition on $\dim\linspan{\mathbf{y_1},\dots,\mathbf{y_\ell}}$.

By Proposition~\ref{toy} we have $\PP[\dim\linspan{\mathbf{y_1},\dots,\mathbf{y_\ell}}=e]\leq C_qq^{-(\ell-e)(w-e)}$.
Also, $\PP[\linspan{\mathbf{x_1},\dots,\mathbf{x_k}}\!\!\perp\!\!\linspan{\mathbf{y_1},\dots,\mathbf{y_\ell}}|\dim\linspan{\mathbf{y_1},\dots,\mathbf{y_\ell}}\!=\!e]=q^{-ke}$.
This gives
\beq
\PP[\mathbf{s_w}=0]\leq C_q\sum_{0\leq e\leq\min(\ell,w)}q^{-f(e)}
\eeq
where $f(e)=ke+(\ell-e)(w-e)$. This function $f$ attains its minimum at $e_0=(\ell+w-k)/2$, from which we deduce, for $0\leq e\leq\min(\ell,w)$:
\beq
f(e)\geq
\begin{cases}
kw+(w-e)\geq kw/2+(w-e) & \textrm{for $w\leq\ell-k$}\\
f(e_0)+\lfloor\abs{e-e_0}\rfloor\geq kw/2+\lfloor\abs{e-e_0}\rfloor  & \textrm{for $\ell-k<w<k+\ell$}\\
kl+(e-k)(w-\ell) & \textrm{for $w\geq k+\ell$.}
\end{cases}
\eeq
The first two cases together give point (i), while the third gives point (ii).
\end{proof}

\begin{theorem}
\label{n<kl}
Let $(k,\ell)\in\cP(\epsilon,\frac{1}{2})$ and $n\leq k\ell$. Then for random $\mathbf{u_1},\dots,\mathbf{u_n}\in \XX$ we have
\beq
\PP[\mathbf{u_1},\dots,\mathbf{u_n}\textrm{ lin. dependent}]\leq \frac{qC_q}{(q-1)^2}\!\left(\frac{2\epsilon}{1-\epsilon}+q^{-(k\ell-n)}\!\right)\!\!.
\eeq
\end{theorem}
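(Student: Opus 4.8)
The plan is to feed the two-regime estimate of Theorem~\ref{Psw} into the union bound of Proposition~\ref{ssw}, splitting the resulting sum precisely at $w=k+\ell$, which is exactly where the two cases of that theorem meet. Starting from $\PP(n)\leq\sum_{w\geq1}\binom{n}{w}(q-1)^{w-1}\PP[\mathbf{s_w}=0]$, I would bound the terms with $1\leq w<k+\ell$ using part~(i) and the terms with $w\geq k+\ell$ using part~(ii); the two pieces should produce the two summands $\frac{2\epsilon}{1-\epsilon}$ and $q^{-(k\ell-n)}$ respectively.

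For the small-weight range, substituting $\PP[\mathbf{s_w}=0]\leq\frac{2qC_q}{(q-1)q^{kw/2}}$ and factoring out $\frac{2qC_q}{(q-1)^2}$ leaves the sum $\sum_{1\leq w<k+\ell}\binom{n}{w}(q-1)^wq^{-kw/2}$. Using the crude estimate $\binom{n}{w}\leq n^w$ bounds each term by $\alpha^w$, where $\alpha=n(q-1)q^{-k/2}$. The point of the hypothesis $(k,\ell)\in\cP(\epsilon,\frac{1}{2})$ is exactly that $n\leq k\ell\leq\frac{\epsilon q^{k/2}}{q-1}$, whence $\alpha\leq\epsilon<1$; extending the sum to all $w\geq1$ and summing the geometric series then gives at most $\frac{\alpha}{1-\alpha}\leq\frac{\epsilon}{1-\epsilon}$, \ie a contribution $\leq\frac{qC_q}{(q-1)^2}\cdot\frac{2\epsilon}{1-\epsilon}$.

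For the large-weight range I would substitute part~(ii) and observe that, since $w\geq k+\ell\geq\ell+2$, the correction factor satisfies $(1-q^{-(w-\ell)})^{-1}\leq(1-q^{-1})^{-1}=\frac{q}{q-1}$ uniformly. Pulling this constant together with $C_qq^{-k\ell}$ out front reduces the remaining sum to $\sum_{w\geq k+\ell}\binom{n}{w}(q-1)^{w-1}\leq\frac{1}{q-1}\sum_{w\geq0}\binom{n}{w}(q-1)^w=\frac{q^n}{q-1}$ by the binomial theorem, yielding a contribution $\leq\frac{qC_q}{(q-1)^2}q^{n-k\ell}$. Adding the two pieces gives the claimed bound. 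The calculation is essentially routine once the split point is chosen; the only real subtlety is the constant bookkeeping, namely checking that the crude bound $\binom{n}{w}\leq n^w$ is still strong enough (it is, because $\alpha<1$ forces geometric decay) and that the parameter constraint with $\kappa=\frac{1}{2}$ is precisely what pins $\alpha$ below $\epsilon$, and hence guarantees convergence of the small-weight series.
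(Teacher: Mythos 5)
Your proof is correct and takes essentially the same route as the paper's (which is only a two-line sketch): split the sum of Proposition~\ref{ssw} at $w=k+\ell$, bound the small-weight terms via Theorem~\ref{Psw}(i) and $\binom{n}{w}\leq n^w\leq(k\ell)^w$ so that the $\cP(\epsilon,\frac{1}{2})$ hypothesis yields a geometric series summing to at most $\frac{\epsilon}{1-\epsilon}$, and bound the large-weight terms via Theorem~\ref{Psw}(ii) and the binomial theorem. Your constant bookkeeping, including the uniform bound $(1-q^{-(w-\ell)})^{-1}\leq\frac{q}{q-1}$ for $w\geq k+\ell$, reproduces the stated constants exactly.
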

\begin{proof}
Split the sum in Proposition~\ref{ssw} in two: for $w<k+\ell$ use Theorem~\ref{Psw}(i) and $\binom{n}{w}\leq(k\ell)^w$;
for $w\geq k+\ell$ use Theorem~\ref{Psw}(ii).
\end{proof}

\section{Products of codes}

By a generating matrix for a linear code $C$ we mean any matrix $G$ whose row span is $C$. We allow $G$ to have more than $\dim C$ rows.

Consider random $\mathbf{G}\in\F_q^{k\times n}$, $\mathbf{G'}\in\F_q^{\ell\times n}$ (uniform distribution),
generating matrices for $C,C'\subset\F_q^n$, so $\dim C\leq k$, $\dim C'\leq\ell$.
Denote by $\mathbf{p_1},\dots,\mathbf{p_n}\in\F_q^k$ the columns and by $\mathbf{x_1},\dots,\mathbf{x_k}\in\F_q^n$ the rows of $G$.
Denote by $\mathbf{q_1},\dots,\mathbf{q_n}\in\F_q^\ell$ the columns and by $\mathbf{y_1},\dots,\mathbf{y_\ell}\in\F_q^n$ the rows of $G'$.

Identify the matrix space $\F_q^{k\times\ell}$ with $\F_q^{k\ell}$ .

The product $C*C'$ and its generating matrix $\widehat{G}\in\F_q^{(k\times\ell)\times n}$ admit the following equivalent descriptions \cite{AGCT}:
\begin{enumerate}[(i)]
\item $\widehat{G}$ has rows all products $\mathbf{x_i}*\mathbf{y_j}$
\item $C*C'$ is the projection of $C\tens C'$ on the diagonal
\item $\widehat{G}$ has columns 
the $\rk\leq\!1$ matrices $\mathbf{p_1}\mathbf{q_1}^T,\dots,\mathbf{p_n}\mathbf{q_n}^T$
\item $C*C'$ is the image of the evaluation map\\
$\begin{matrix}\ev\!:\! & \!\!\Bilin(\F_q^k\times\F_q^\ell)\!\! & \longto & \F_q^n\\ & B & \mapsto & \!\!\!(B(\mathbf{p_1},\!\mathbf{q_1}),\dots,B(\mathbf{p_n},\!\mathbf{q_n})).\end{matrix}$
\end{enumerate}

From description (iii) we can translate our Theorems~\ref{n>kl} and~\ref{n<kl}.
Recall $q^{(1-\kappa)^2}\geq1+\frac{q-1}{q}$, and $0<\epsilon<1$.
\begin{theorem}
\label{n>kl_alt}
For $(k,\ell)\in\cP(\epsilon,\kappa)$ and $n\geq k\ell$, we have
\beq
\PP[\dim C*C'<kl]\leq c''\rho^{n-k\ell}
\eeq
with $\rho=\frac{1}{q}\left(1+\frac{q-1}{q}\right)$ and $c''=\frac{qC_q}{(q-1)^2}\left(1+\frac{1}{1-\epsilon}\right)$.
\end{theorem}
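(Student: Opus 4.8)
The plan is to reduce Theorem~\ref{n>kl_alt} directly to Theorem~\ref{n>kl} using description~(iii) of the product $C*C'$. The essential observation is that the columns of the generating matrix $\widehat{G}$ for $C*C'$ are precisely the rank~$\leq 1$ matrices $\mathbf{p_1}\mathbf{q_1}^T,\dots,\mathbf{p_n}\mathbf{q_n}^T$, viewed as elements of $\F_q^{k\times\ell}\simeq\F_q^{k\ell}$. Therefore, under the identification of the matrix space with $\F_q^{k\ell}$, the dimension $\dim C*C'$ equals the dimension of the linear span of these $n$ matrices. Since $\dim C*C'\leq k\ell$ always, the event $\dim C*C'<k\ell$ is exactly the event that $\mathbf{p_1}\mathbf{q_1}^T,\dots,\mathbf{p_n}\mathbf{q_n}^T$ fail to span $\F_q^{k\times\ell}$.

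The second step is to check that the random model matches. The matrices $\mathbf{G}\in\F_q^{k\times n}$ and $\mathbf{G'}\in\F_q^{\ell\times n}$ are uniform and independent, so their columns $\mathbf{p_1},\dots,\mathbf{p_n}\in\F_q^k$ and $\mathbf{q_1},\dots,\mathbf{q_n}\in\F_q^\ell$ are uniform and mutually independent. Consequently each matrix $\mathbf{u_i}=\mathbf{p_i}\mathbf{q_i}^T$ is exactly a sample from the distribution $\cL$ on $\XX$ defined in Section~3, and the $\mathbf{u_1},\dots,\mathbf{u_n}$ are independent. This is precisely the setup of Theorem~\ref{n>kl}.

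Combining these two observations, I would write
\beq
\PP[\dim C*C'<k\ell]=\PP[\mathbf{u_1},\dots,\mathbf{u_n}\textrm{ don't span }\F_q^{k\times\ell}],
\eeq
and invoke Theorem~\ref{n>kl} with $(k,\ell)\in\cP(\epsilon,\kappa)$ and $n\geq k\ell$ to bound the right-hand side by $c''\rho^{n-k\ell}$, with the stated values of $\rho$ and $c''$. This completes the argument.

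There is essentially no hard part here: the theorem is a restatement of Theorem~\ref{n>kl} through the dictionary supplied by description~(iii). The only point requiring a moment's care is the verification that the distribution on the columns $\mathbf{p_i}\mathbf{q_i}^T$ induced by uniform independent $\mathbf{G},\mathbf{G'}$ coincides with $\cL$, together with the remark (already made in Section~3) that any upper bound on the error probability for $(\XX,\cL)$ also bounds the corresponding quantity for the code product. Once this identification is in place, the probabilistic content is entirely carried by Theorem~\ref{n>kl}, and no further estimation is needed.
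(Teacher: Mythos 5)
Your proof is correct and follows exactly the route the paper intends: Theorem~\ref{n>kl_alt} is obtained by translating Theorem~\ref{n>kl} through description~(iii), identifying the columns of $\widehat{G}$ with i.i.d.\ samples $\mathbf{p_i}\mathbf{q_i}^T$ from $(\XX,\cL)$ and noting that $\dim C*C'$ is the dimension of their span. The verification that uniform independent $\mathbf{G},\mathbf{G'}$ induce precisely the distribution $\cL$ is the only point of substance, and you handle it correctly.
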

\begin{theorem}
\label{n<kl_alt}
For $(k,\ell)\in\cP(\epsilon,\frac{1}{2})$ and $n\leq k\ell$, we have
\beq
\PP[\dim C*C'<n]\leq \frac{qC_q}{(q-1)^2}\!\left(\frac{2\epsilon}{1-\epsilon}+q^{-(k\ell-n)}\!\right)\!\!.
\eeq
\end{theorem}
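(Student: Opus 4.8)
The plan is to deduce Theorem~\ref{n<kl_alt} directly from Theorem~\ref{n<kl} by means of description~(iii) of the product code, exactly as Theorem~\ref{n>kl_alt} follows from Theorem~\ref{n>kl}. The bridge is the observation that the columns of the generating matrix $\widehat{G}$ of $C*C'$ are precisely the rank~$\leq1$ matrices $\mathbf{u_i}=\mathbf{p_i}\mathbf{q_i}^T\in\F_q^{k\times\ell}$, where the $\mathbf{p_i}$ are the columns of $\mathbf{G}$ and the $\mathbf{q_i}$ the columns of $\mathbf{G'}$. Since $\mathbf{G}$ and $\mathbf{G'}$ are uniformly random and independent, the pairs $(\mathbf{p_i},\mathbf{q_i})$ are independent and each $\mathbf{p_i}\in\F_q^k$, $\mathbf{q_i}\in\F_q^\ell$ is uniform; hence the $\mathbf{u_1},\dots,\mathbf{u_n}$ are exactly $n$ independent samples from $(\XX,\cL)$, the model set up at the start of Section~3.

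The next step is to identify the two events. Under the identification $\F_q^{k\times\ell}\simeq\F_q^{k\ell}$, the dimension of $C*C'$ equals the rank of $\widehat{G}$, which equals the dimension of the linear span of its columns $\mathbf{u_1},\dots,\mathbf{u_n}$. Therefore the event $\dim C*C'<n$ is identical to the event that $\mathbf{u_1},\dots,\mathbf{u_n}$ are linearly dependent (recall $n\leq k\ell=\dim\F_q^{k\times\ell}$, so $\min(n,\dim V)=n$ and failure to reach dimension $n$ is exactly linear dependence). Consequently
\beq
\PP[\dim C*C'<n]=\PP[\mathbf{u_1},\dots,\mathbf{u_n}\textrm{ lin. dependent}],
\eeq
and the right-hand side is bounded by Theorem~\ref{n<kl}, which applies verbatim because $(k,\ell)\in\cP(\epsilon,\frac12)$ and $n\leq k\ell$ are precisely the hypotheses there. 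This yields the claimed bound.

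I expect no serious obstacle here, since the content is entirely carried by Theorem~\ref{n<kl}; the only point needing a word of care is the reduction from uniformly random generating matrices to the $(\XX,\cL)$ model. This is already handled by the remark in Section~3: sampling $\mathbf{p_i}\in\F_q^k$ and $\mathbf{q_i}\in\F_q^\ell$ uniformly (possibly zero) and forming $\mathbf{p_i}\mathbf{q_i}^T$ is exactly the distribution $\cL$, and any upper bound on $\PP(n)$ for $(\XX,\cL)$ transfers to the quantity of interest. One should also note that taking $\mathbf{G},\mathbf{G'}$ with exactly $k,\ell$ rows (rather than more) only makes $\dim C\leq k$, $\dim C'\leq\ell$, which is consistent with the inequality form of the statement. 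Thus the proof reduces to citing the column description~(iii) and invoking Theorem~\ref{n<kl}.
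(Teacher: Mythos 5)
Your proposal is correct and coincides with the paper's own (implicit) argument: the paper proves Theorem~\ref{n<kl_alt} precisely by the remark ``From description (iii) we can translate our Theorems~\ref{n>kl} and~\ref{n<kl},'' i.e.\ by identifying the columns of $\widehat{G}$ with $n$ independent samples from $(\XX,\cL)$ and equating the event $\dim C*C'<n$ with linear dependence of those columns. Nothing further is needed.
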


Note that if $k\to\infty$ and $k\ell/q^{k/2}\to0$ (for instance if $\ell$ is polynomial in $k$), we can set $\epsilon=(q-1)k\ell/q^{k/2}\to0$.

Still, we can derive an unconditional result, valid for any $(k,\ell)$. Recall the maximum distance $\dmax$ of a linear code
is the \emph{maximum} weight of a codeword.
\begin{theorem}
\label{dmaxdual}
For any $(k,\ell)$, and  $k+\ell\leq n\leq k\ell$, we have
\beq
\PP[\dmax(C*C')^\perp\geq k+\ell]\leq\frac{qC_q}{(q-1)^2}q^{-(k\ell-n)}.
\eeq
\end{theorem}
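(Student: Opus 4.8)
The plan is to reinterpret the dual code $(C*C')^\perp$ geometrically, exactly as in the passage from Theorem~\ref{n<kl} to its product-of-codes form. Writing $\mathbf{u_i}=\mathbf{p_i}\mathbf{q_i}^T$ and using description~(iii), a word $\mathbf{z}=(z_1,\dots,z_n)$ lies in $(C*C')^\perp$ iff $\sum_i z_i\,l_{\mathbf{B}}(\mathbf{u_i})=l_{\mathbf{B}}(\sum_i z_i\mathbf{u_i})=0$ for every $\mathbf{B}\in\F_q^{k\times\ell}$. Since the forms $l_{\mathbf{B}}$ exhaust the dual of $\F_q^{k\times\ell}$ by Lemma~\ref{bilin}(i), this happens precisely when $\sum_i z_i\mathbf{u_i}=0$. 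Hence $(C*C')^\perp$ is nothing but the space of linear relations for $\mathbf{u_1},\dots,\mathbf{u_n}$, and the event $\dmax(C*C')^\perp\geq k+\ell$ asserts that some linear relation has Hamming weight at least $k+\ell$.

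With this reformulation, I would bound the probability by a union bound over linear relations of each weight $w\geq k+\ell$, counting them up to proportionality as in Proposition~\ref{ssw}. Since a fixed vector of weight $w$ is a relation with probability $\PP[\mathbf{s_w}=0]$, this yields
\beq
\PP[\dmax(C*C')^\perp\geq k+\ell]\leq\sum_{w\geq k+\ell}\binom{n}{w}(q-1)^{w-1}\PP[\mathbf{s_w}=0].
\eeq
The decisive feature is that every weight occurring here satisfies $w\geq k+\ell$, so only Theorem~\ref{Psw}(ii) is invoked; this is exactly what makes the statement unconditional, since we never touch the small-weight regime whose control via Theorem~\ref{Psw}(i) forced the hypothesis $(k,\ell)\in\cP$ in Theorem~\ref{n<kl}.

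It then remains to sum the series, which I expect to be routine. For $w\geq k+\ell$ one has $w-\ell\geq k\geq1$, so $(1-q^{-(w-\ell)})^{-1}\leq(1-q^{-1})^{-1}=q/(q-1)$ and Theorem~\ref{Psw}(ii) gives $\PP[\mathbf{s_w}=0]\leq qC_q/((q-1)q^{k\ell})$. Factoring out this constant and using $\sum_{w\geq0}\binom{n}{w}(q-1)^{w-1}=q^n/(q-1)$ bounds the right-hand side by $\frac{qC_q}{(q-1)^2}q^{-(k\ell-n)}$, as claimed. The only genuinely conceptual point is the duality identification of the first paragraph; once the event is expressed through high-weight linear relations, everything else is a short computation.
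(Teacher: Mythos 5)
Your proof is correct and is exactly the paper's argument: the paper's (very terse) proof also identifies the event with the existence of a linear relation of weight $\geq k+\ell$ among the columns $\mathbf{p_i}\mathbf{q_i}^T$, keeps only the terms $w\geq k+\ell$ in the union bound of Proposition~\ref{ssw}, and applies only Theorem~\ref{Psw}(ii). Your write-up merely makes explicit the duality identification and the final summation $\sum_w\binom{n}{w}(q-1)^{w-1}\leq q^n/(q-1)$, both of which check out.
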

\begin{proof}
Union bound for $\PP[\exists\textrm{lin. rel. of weight }\geq k+\ell]$,
which means keep only terms $w\geq k+\ell$ in Proposition~\ref{ssw}, and use only part (ii) of Theorem~\ref{Psw}.
\end{proof}
So, with high probability $(C*C')^\perp$ has $\dmax\!<\!k+\ell$. This is a strong restriction (for instance it also implies $\dim\!<\!k+\ell$).

\section{Squares and higher powers}

Let $C$ have generating matrix $\mathbf{G}\in\F_q^{k\times n}$, with columns $\mathbf{p_1},\dots,\mathbf{p_n}\in\F_q^k$ and rows $\mathbf{x_1},\dots,\mathbf{x_k}\in\F_q^n$.
As above, the $s$-th power $C\deux[s]$ and its generating matrix $\widehat{G}$ admit the following equivalent descriptions:
\begin{enumerate}[(i)]
\item $\widehat{G}$ has rows all $*$-monomials of degree $s$ in the $\mathbf{x_i}$
\item $C\deux[s]$ is the projection of $C^{\tens s}$ on the diagonal
\item $\widehat{G}$ has columns 
the elementary tensors $\mathbf{p_1}^{\tens s},\dots,\mathbf{p_n}^{\tens s}$
\item $C\deux[s]$ is the image of the evaluation map\\
$\begin{matrix}\ev\!:\! & \!\!\F_q[t_1,\dots,t_k]_s & \longto & \F_q^n\\ & P & \mapsto & \!\!\!(P(\mathbf{p_1}),\dots,P(\mathbf{p_n})).\end{matrix}$
\end{enumerate}
(Where $R_s$ denotes the $s$-th homogeneous component of $R$.)

We deduce at once $\dim C\deux[s]\leq\min(n,\binom{k+s-1}{s})$.
For $s=2$ it is shown in \cite{CCMZ} that for random such $C$, with high probability there is equality: $\dim C\deux=\min(n,\frac{k(k+1)}{2})$
(which could in turn be translated into a general position result for rank~$1$ symmetric matrices).
It is interesting to note that not having to face unbalanced $(k,\ell)$ made it easier for these authors to deal with short relations, hence to control $\dmin(C\deux)^\perp$ in \cite[Prop.~2.4]{CCMZ}.
By contrast, in our setting, independence of $C$ and $C'$ made it easier to deal with long relations, hence to control $\dmax(C*C')^\perp$ in Theorem~\ref{dmaxdual}.

Concerning higher powers, one should be careful of the:
\begin{proposition}
\label{strict}
For $s>q$ we always have strict inequality 
\beq
\dim C\deux[s]<\binom{k+s-1}{s}.
\eeq
More precisely, we have
\beq\dim C\deux[s]\leq\min(n,\chi_q(k,s))\eeq
where \cite[App.~A]{AGCT}:
\beq\textstyle\chi_q(k,s)=\dim\SFrob{s}\F_q^k=\dim(\F_q[t_1,\dots,t_k]/(t_i^qt_j-t_it_j^q))_s.\eeq
\end{proposition}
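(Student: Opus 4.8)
The plan is to read off everything from description (iv). Since $C\deux[s]$ is the image of the evaluation map $\ev$ on $\F_q[t_1,\dots,t_k]_s$, its dimension is at once bounded by $\dim\F_q[t_1,\dots,t_k]_s=\binom{k+s-1}{s}$; the content of the proposition is that this naive bound is never attained once $s>q$, regardless of $C$. The reason is that the evaluation points $\mathbf{p_1},\dots,\mathbf{p_n}$ lie in $\F_q^k$, so $\ev$ cannot distinguish polynomials that agree on all of $\F_q^k$.

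First I would record the Frobenius identity: for any $\mathbf{p}=(a_1,\dots,a_k)\in\F_q^k$ we have $a_i^q=a_i$, hence $a_i^qa_j=a_ia_j=a_ia_j^q$, so each generator $t_i^qt_j-t_it_j^q$ of the ideal $I=(t_i^qt_j-t_it_j^q)$ vanishes at every $\F_q$-point. As these generators are homogeneous of degree $q+1$, the ideal $I$ is homogeneous and $\ev$ annihilates $I_s$; it therefore factors through the degree-$s$ component of the quotient,
\beq
\ev:\;\F_q[t_1,\dots,t_k]_s/I_s=(\F_q[t_1,\dots,t_k]/I)_s\longto\F_q^n.
\eeq
Taking images gives $\dim C\deux[s]\leq\dim(\F_q[t_1,\dots,t_k]/I)_s=\chi_q(k,s)$, and together with the trivial bound $\dim C\deux[s]\leq n$ (it is a subspace of $\F_q^n$) this yields $\dim C\deux[s]\leq\min(n,\chi_q(k,s))$.

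It then remains to establish the strict inequality $\chi_q(k,s)<\binom{k+s-1}{s}$ for $s>q$, which is equivalent to $I_s\neq0$. Taking $k\geq2$ (for $k=1$ the relation degenerates to $0$ and there is nothing to prove), I would exhibit an explicit nonzero element: for $s>q$ the polynomial
\beq
t_1^{s-q-1}(t_1^qt_2-t_1t_2^q)=t_1^{s-1}t_2-t_1^{s-q}t_2^q
\eeq
lies in $I_s$, and it is nonzero since its two monomials have exponent vectors $(s-1,1,0,\dots)$ and $(s-q,q,0,\dots)$, which differ whenever $q\geq2$. Hence $I_s\neq0$, the quotient strictly drops dimension, and the strict inequality follows. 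The argument is entirely formal once the factorization through $I$ is in place; the only point requiring a little care is the bookkeeping showing $I_s\neq0$ in every degree $s>q$, and in particular noticing that $k\geq2$ is needed. I would also stress that no genericity of $C$ enters anywhere, which is exactly what makes this obstruction unavoidable, in contrast with the equality statements of the preceding sections.
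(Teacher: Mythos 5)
Your argument is correct and coincides with the paper's own proof, which observes (via description (iv)) that $\ker(\ev)$ contains all multiples of the $t_i^qt_j-t_it_j^q$; you simply spell out the factorization through the homogeneous quotient and exhibit an explicit nonzero element of degree $s$ in the ideal to force the strict inequality (correctly noting that $k\geq2$ is needed there). The paper also records an equivalent formulation via description (ii), namely that the projection $C^{\tens s}\to C\deux[s]$ factors through $\SFrob{s}C$, but the substance is the same.
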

\begin{proof}
The map $*$ is Frobenius-symmetric, so in (ii) the projection $C^{\tens s}\to C\deux[s]$ factors through $\SFrob{s}C$.
Alternatively, in (iv), $\ker(\ev)$ contains all multiples of the $t_i^qt_j-t_it_j^q$.
\end{proof}

\section{Open problems}

In our probabilistic model we considered random matrices of the form $\mathbf{u_i}=\mathbf{p_i}\mathbf{q_i}^T$
for column vectors $\mathbf{p_i}\in\F_q^k$, $\mathbf{q_i}\in\F_q^\ell$ possibly zero.
However, as already noted, it is perhaps more natural to restrict these $\mathbf{p_i},\mathbf{q_i}$ to stay nonzero,
so the $\mathbf{u_i}$ become uniformly distributed rank~$1$ matrices.
Considering the $\mathbf{p_i}$ (resp. $\mathbf{q_i}$) as the columns of a generating matrix of a code $C$
(resp. $C'$), this translates into considering only codes with full support---although of dimension possibly less than $k$ (resp. $\ell$).
Then, a further model would be to request these generating matrices having full rank. 
That means: take $C$ (resp. $C'$) uniformly distributed in the set of $[n,k]$ (resp. $[n,\ell]$) codes with full support.
Clearly this could only help get sharper bounds. In particular:

\begin{problem}
Do these alternative models allow to relax our condition $\cP(\epsilon,\kappa)$? Do they give bounds valid without any restriction on $(k,\ell)$?
\end{problem}

Proposition~\ref{asympt} suggests that the fate of long relations should essentially not depend on the probabilistic model.
On the other hand, for short relations, it certainly does.
In fact, relations of weight less than $k+\ell$ are perhaps less tractable because, for such a length, $C$ and $C'$ necessarily intersect.
This leads to the following, which would encompass both our results (remove the conditioning) and those of \cite{CCMZ} (set $i=k=\ell$):
\begin{problem}
For any $n,k,\ell,i,j$, estimate the conditional probability
\beq
\PP[\dim C*C'=j|\dim C\cap C'=i].
\eeq
\end{problem}

We saw the existence of relations of length $w$ is related to the distribution of $\mathbf{s_w}=\mathbf{u_1}+\cdots+\mathbf{u_w}$.
When the $\mathbf{u_i}$ are uniformly distributed matrices of rank~$1$, this reduces to:
\begin{problem}
In $\F_q^{k\times\ell}$, what is the number
\beq
N^{k\times\ell}_q(r,w)
\eeq
of decompositions of a matrix of rank~$r$ as an ordered sum of $w$ matrices of rank~$1$?
\end{problem}
It is easily seen that this number is well defined, which means, it is the same for all such matrices of rank~$r$. 
Of special importance are the $N^{k\times\ell}_q(0,w)$, which control $\PP[\mathbf{s_w}=0]$.
We leave it as an exercise to link their computation with that of the weight distribution of the code
$(S_k\tens S_\ell)^\perp$,
where $S_k$ is the $[\frac{q^k-1}{q-1},k]$ $q$-ary simplex code.

Considering powers of a code leads similarly to count families of elementary $s$-th power tensors summing to zero.
\begin{problem}
For fixed $s$, and a random $[n,k]$ code $C$, estimate the probability $\PP[\dim C\deux[s]=\min(n,\chi_q(k,s))]$.

And then, what if we also let $s$ vary?
\end{problem}
It is interesting to note that, up to code equivalence, any $[n,k]$ code $C$ with full support can be obtained
from the simplex code $S_k$ by deleting and repeating columns. Then $C\deux[s]$ is obtained from $S_k\deux[s]$
by deleting and repeating the same columns. Some authors also call $S_k\deux[s]$ the $s$-th order projective Reed-Muller
code (in $k$ variables); it has dimension $\chi_q(k,s)$.
As above, we can split our Problem in two cases: for $n\geq\chi_q(k,s)$, we're interested in relations between rows of the
generating matrix of $C$, which is linked to the weight distribution of $S_k\deux[s]$;
while for $n\leq\chi_q(k,s)$, we're interested in relations between columns, which is linked to its dual weight distribution.

Last, it is the author's opinion that considering only the dimension of products is not entirely in the spirit of coding theory.
In fact, it is a purely algebraic problem, where $(\F_q^n,*)$ could be replaced by any space equipped with a bilinear inner composition law.
See \cite{BSZ} for an example where the space is an extension field with its natural multiplication.
However, what is genuinely coding-theoretic is to consider minimum distance beside dimension.
It is well known that, asymptotically, a random code lies on the Gilbert-Varshamov bound $R=1-H(\delta)$.
It is then very natural to ask:
\begin{problem}
Does the product of two random codes, or the square or higher powers of a random code, lie on the GV bound?
\end{problem}
Observe that the answer would be negative if the question were stated with tensor product instead of $*$-product.


\begin{thebibliography}{1}

\bibitem{BSZ}
C.~Bachoc, O.~Serra, and G.~Z\'emor.
``An analogue of Vosper's theorem for extension fields''.
Preprint, available:
\url{http://arxiv.org/abs/1501.00602}

\bibitem{BDEZ}
R.~Barbulescu, J.~Detrey, N.~Estibals, and P.~Zimmermann.
``Finding optimal formulae for bilinear maps'',
in
\emph{Arithmetic of Finite Fields --- WAIFI 2012}
(Lecture Notes in Comp. Science, vol.~7369),
F.~\"Ozbudak and F.~Rodriguez-Henriquez, Eds.
Berlin: Springer-Verlag, 2012, pp.~168-186.

\bibitem{BD}
R.~W.~Brocket and D.~Dobkin,
``On the optimal evaluation of a set of bilinear forms'',
\emph{Lin. Alg. Appl.}, vol.~19, pp.~624-628, 1978.

\bibitem{CCMZ}
I.~Cascudo, R.~Cramer, D.~Mirandola, and G.~Z\'emor,
``Squares of random linear codes'',
\emph{IEEE Trans. Inform. Theory}, vol.~61, 2015.
To appear.

\bibitem{CGGOT}
A.~Couvreur, P.~Gaborit, V.~Gauthier, A.~Otmani, and J.-P.~Tillich.
``Distinguisher-based attacks on public-key cryptosystems using Reed-Solomon codes'',
\emph{Des. Codes Crypto.}, Vol.~73, pp.~641-666, 2014.

\bibitem{COT}
A.~Couvreur, A.~Otmani, and J.-P.~Tillich.
``Polynomial time attack on wild McEliece over quadratic extensions'',
in
\emph{Advances in Cryptology --- EUROCRYPT 2014}
(Lecture Notes in Comp. Science, vol.~8441),
Ph.~Nguyen and E.~Oswald, Eds.
Berlin: Springer-Verlag, 2014, pp.~17-39.

\bibitem{LW}
A.~Lempel and S.~Winograd,
``A new approach to error-correcting codes'',
\emph{IEEE Trans. Inform. Theory},
vol.~23, pp.~503-508, 1977.



\bibitem{ChCh+}
H.~Randriambololona,
``Bilinear complexity of algebras and the Chudnovsky-Chudnovsky interpolation method'',
\emph{J.~Complexity}, vol.~28, pp.~489-517, 2012.


\bibitem{AGCT}
H.~Randriambololona,
``On products and powers of linear codes under componentwise multiplication'', in:
\emph{Arithmetic, Geometry, Cryptography, and Coding Theory (AGCT-14)}
(Contemporary Math., vol.~637). AMS, 2015.
To appear.

\bibitem{W}
H.~Whitney,
``On the abstract properties of linear independence'',
Amer. J. Math., vol.~57, pp.~509-533, 1935.


\end{thebibliography}
\end{document}